\documentclass{IEEEtran}
\usepackage{graphics}
\usepackage{stfloats}
\usepackage{epsfig}
\usepackage{times}
\usepackage{amsthm}
\usepackage{amssymb}
\usepackage{amsmath}
\usepackage{indentfirst}
\usepackage{epstopdf}
\usepackage{multirow}
\usepackage{pdfsync}
\usepackage{float}
\usepackage{color}
\usepackage{enumerate}
\usepackage{subfigure}
\usepackage{comment}
\usepackage{cite}
\usepackage{caption}
\usepackage{algpseudocode}
\usepackage{algorithm}
\usepackage{algpseudocode}
\usepackage{flushend}
\DeclareMathOperator*{\minimize}{minimize}
\usepackage{tcolorbox}

\title{Adaptive Data-Driven Min-Max MPC for Linear Time-Varying Systems}

\author{Yifan~Xie, Julian~Berberich, Frank Allg\"{o}wer
	\thanks{F. Allg\"{o}wer is thankful that his work was funded by Deutsche Forschungsgemeinschaft (DFG, German Research
Foundation) under Germany’s Excellence Strategy - EXC 2075 - 390740016 and within grant AL 316/15-1-468094890.
The authors thank the International Max Planck Research School
for Intelligent Systems (IMPRS-IS) for supporting Yifan Xie.
	}
	\thanks{Yifan~Xie, Julian~Berberich, Frank Allg\"{o}wer are with the Institute for Systems Theory and Automatic Control, University of Stuttgart, 70550 Stuttgart, Germany. Julian~berberich is also with the Center for Integrated Quantum Science and Technology (IQST), University of Stuttgart, 70550 Stuttgart, Germany.
		{\tt\small  (email: \{yifan.xie, julian.berberich,
        frank.allgower\}@ist.uni-stuttgart.de}). }
}

\newtheorem{mythm}{Theorem}

\newtheorem{mylem}{Lemma}

\newtheorem{myexm}{Example}
\newtheorem{remark}{Remark}
\newtheorem{assum}{Assumption}

\begin{document}
	
	\maketitle

\begin{abstract}
In this paper, we propose an adaptive data-driven min-max model predictive control (MPC) scheme for discrete-time linear time-varying (LTV) systems.
We assume that prior knowledge of the system dynamics and bounds on the variations are known, and that the states are measured online.
Starting from an initial state-feedback gain derived from prior knowledge, the algorithm updates the state-feedback gain using online input-state data.
To this end, a semidefinite program (SDP) is solved to minimize an upper bound on the infinite-horizon optimal cost and to derive a corresponding state-feedback gain.
We prove that the resulting closed-loop system is exponentially stabilized and satisfies the constraints.
Further, we extend the proposed scheme to LTV systems with process noise.
The resulting closed-loop system is shown to be robustly stabilized to a robust positive invariant (RPI) set.
Finally, the proposed methods are demonstrated by numerical simulations.
\end{abstract}
\begin{IEEEkeywords}
Data-driven control, Linear time-varying (LTV) systems, Robust control, Linear matrix inequalities (LMIs), Adaptive control.
\end{IEEEkeywords}

\section{Introduction}\label{sec:1}
Data-driven control has received growing interest in the control community in recent years.
Direct data-driven control methods aim to design controllers directly from data, without first identifying a system model.
Several frameworks have been proposed to design data-driven controllers, i.e., data informativity \cite{van2023informativity, DBLSCT2025}, the behavioral approach \cite{markovsky2023datapower} and data-driven model predictive control (MPC) \cite{berberich2024overview}.
In the data informativity framework, a quadratic matrix inequality (QMI) can effectively characterize the set of systems that explain the data, and the focus is to design controllers for all systems in this set. 
This framework was initially developed for linear time-invariant (LTI) systems \cite{persis2020formulas,van2023quadratic}, and has been extended to LPV systems \cite{verhoek2025direct, mejari2023data,miller2022data,verhoek2024decoupling}, and polynomial systems \cite{martin2023inference, martin2023guarantees}.

Linear time-varying (LTV) systems are systems whose parameters change over time and capture the time-dependent dynamics of physical processes \cite{tsakalis1993linear}, i.e.,  system behavior evolves due to varying factors such as temperature, pressure, etc.
To address the challenges posed by time-varying parameters, several adaptive control schemes have been proposed \cite{chen2020adaptive,tanaskovic2019adaptive,bujarbaruah2020adaptive,kohler2021robust}.
In \cite{tanaskovic2019adaptive}, a robust adaptive MPC scheme is proposed for LTV systems, which employs a real-time set membership identification algorithm using input–output data and the available prior knowledge. 
Similarly, \cite{bujarbaruah2020adaptive} proposes a set-membership-based model adaptation algorithm to estimate and update the time-varying uncertainty along with an adaptive stochastic MPC framework for systems perturbed by an additive time-varying uncertainty and process noise.
A tube-based robust adaptive MPC scheme is proposed in \cite{kohler2021robust} for nonlinear systems subject to parametric uncertainty and additive disturbances.
These methods typically consider polytopic time-varying uncertainty and follow a two-step process: identifying a nominal model using data and prior information, and applying a tube-based adaptive MPC scheme, which is called as indirect data-driven control.

Several data-driven control methods have been proposed for LTV systems \cite{hu2024robust, iannelli2024hybrid, nortmann2023direct, liu2023online, kogan2025robust}.
In \cite{nortmann2023direct}, the authors assume access to multiple data sequences generated by an LTV system. 
The system is treated as a sequence of LTI models and design controllers independently for each model in the sequence.
Both prior knowledge and data are used to design an $H_\infty$ controller in \cite{kogan2025robust}.
Other works have focused on switched systems, which use online data to identify the active mode and apply mode-dependent stabilizing controllers \cite{rotulo2022online, eising2024data}.
Online data-driven control has also been explored for nonlinear systems \cite{bozza2024online, berberich2022linear} and LTI systems \cite{baros2022online, xie2026data} using real-time measurements for control adaptation. 
In \cite{qu2021stable}, an online control method is proposed for LTV systems under stochastic noise.
Among existing works, \cite{liu2023online} considers using the knowledge of the bounds on the variation of the system dynamics and online data to capture the current system dynamics, however, recursive feasibility is not guaranteed, and constraints are not considered.
Inspired by \cite{liu2023online}, we develop an adaptive data-driven min-max MPC framework for LTV systems, using prior knowledge of the system dynamics, knowledge on the variations of the system dynamics, and online data. 
Additionally, we aim to provide rigorous guarantees of recursive feasibility and constraint satisfaction.

Min-max MPC has been proposed to address systems with parametric uncertainties on the system dynamics or process disturbances \cite{kothare1996robust, bemporad2003minmax, lu2000quasi, limon2006input}.
In the min-max MPC framework, linear matrix inequality (LMI)-based methods have been proposed to derive a tractable state-feedback control law using prior knowledge on the parametric uncertainty set that robustly stabilizes the system.
The resulting control law minimizes the worst-case infinite-horizon cost w.r.t. parametric uncertainty \cite{kothare1996robust}.
Extensions of this framework, such as quasi min-max MPC have been proposed in \cite{lu2000quasi, yan2024improved}, where the system parameters are assumed to be known at the current time step but unknown in future steps.
Quasi-min-max MPC has also been proposed for LPV systems with polytopic parametric uncertainty and bounded rate of parameter changes \cite{li2010feedback, lu2000quasi2, park2004constrained}.
These formulations use known bounds on the rate of change of parameters to update the uncertainty polytope.
Recently, data-driven
min–max MPC schemes were proposed to use noisy input-state data to design controllers for various system classes, including LTI systems \cite{xie2026data, nguyen2025reducing}, LPV systems \cite{xie2024dataLPV}, and bilinear systems \cite{xie2025bilinear}.
The objective is to design a control law to minimize the worst-case cost over all system models consistent with the data, thereby ensuring robust performance and stabilization under uncertainty. 
However, data-driven min-max
MPC schemes for LTV systems remains an open challenge.

In this paper, we propose an adaptive data-driven min-max MPC scheme for LTV systems using prior knowledge of the system dynamics and online collected input-state data.
We assume that the variations of the system dynamics are bounded and the bounds are known.
Using the online input-state data and the known variation bounds on the system dynamics, we propose a data-driven characterization of the system matrices.
We then formulate an SDP program that minimizes an upper bound on the adaptive data-driven min-max MPC cost and derives a corresponding state-feedback controller. 
A receding-horizon algorithm is proposed to update the state-feedback control gain at each time steps.
The proposed scheme is recursively feasible, the resulting closed-loop system is exponentially stabilized to the origin, and satisfies the constraints.
Furthermore, we extend the adaptive data-driven min-max MPC scheme to LTV systems with bounded process noise.
Using the online noisy input-state data, a data-driven characterization is derived using the known bounds on the process noise and on the variation of the system dynamics.
We show that the resulting closed-loop system is robustly stabilized to a robust positive invariant (RPI) set.
Numerical examples show that the proposed schemes ensure stability and constraint satisfaction, and the online data improves the closed-loop performance.

The remainder of the paper is organized as follows.
Section~\ref{sec:2} introduces the problem setup.
In Section~\ref{sec:3}, we present a data-driven characterization of the system matrices based on online input-state data and assumptions on the rate of variation of the system matrices.
Section~\ref{sec:4} proposes an adaptive data-driven min-max MPC scheme for LTV systems. 
Closed-loop guarantees, including recursive feasibility, constraint satisfaction, and exponential stability are proved.
In Section~\ref{sec:5}, an adaptive data-driven min-max MPC scheme is proposed for LTV systems with process noise.
%The resulting closed-loop system is proven to be robustly stabilized to a robust positive invariant (RPI) set.
Numerical examples illustrating the performance of the proposed method are provided in Section~\ref{sec:6}.
Finally, Section~\ref{sec:7} concludes the paper and outlines directions for future research.

\textbf{Notations:} the set of natural numbers is denoted by $\mathbb{N}$.
The set of integers from $a$ to $b$ is denoted by $\mathbb{I}_{[a, b]}$.
For two natural numbers $m, n\in\mathbb{N}$, $m\mod{n}$ denotes the remainder of the division of $m$ by $n$.
The matrix $P\in\mathbb{R}^{n\times n}$ is called positive definite if $x^\top P x>0$ for all $x\in\mathbb{R}^n$ and positive semi-definite if $x^\top P x\geq 0$ for all $x\in\mathbb{R}^n$.
We write $P\succ 0$ if it is positive definite and $P\succeq 0$ if it is positive semi-definite.
We denote by $\lambda_{\min}(P)$ and $\lambda_{\max}(P)$ the minimum and maximum eigenvalues of the matrix $P$, respectively.
For a vector $x$ and a matrix $P\succeq 0$, we write $\|x\|_P^2=x^\top P x$.
%For two sets $S_1, S_2$, we denote the Minkowski sum as $S_1\oplus S_2=\{s_1+s_2:s_1\in S_1, s_2\in S_2\}$.
For matrices $A, B$ of compatible dimensions, we abbreviate $A B A^\top$ by $A B \begin{bmatrix}\star\end{bmatrix}^\top$.
%We denote by $0_{m\times n}$ the matrix with $m$-rows and $n$-columns, where every entry is zero.

\section{Problem Setup}\label{sec:2}
We consider an unknown discrete-time LTV system
\begin{equation}\label{system}
x_{t+1}=A_t x_t+B_t u_t,
%+\omega_t,
\end{equation}
where $x_t\in\mathbb{R}^n$ and $u_t\in\mathbb{R}^m$ denote the state and control input at time $t$, respectively.
The system matrices $A_t\in\mathbb{R}^{n\times n}$ and $B_t\in\mathbb{R}^{n\times m}$ are time-varying and unknown at each time step.
Without any prior knowledge of the system dynamics, designing a controller and establishing theoretical guarantees for the system \eqref{system} is difficult.
Therefore, we assume the prior knowledge of an uncertainty set that bounds the possible values of $A_t$ and $B_t$ at every time step, as specified in Assumption~\ref{assumption1}.
Moreover, we assume that the variations of the system matrices are bounded and the bound is known, as later stated in Assumption~\ref{assumption2}. 
The prior knowledge of an uncertainty set allows us to design a robust controller with theoretical guarantees for the closed-loop system.
The bounded variation assumption lets us adaptively redesign the controller over time using online input-state data to improve closed-loop performance.

\begin{assum}\label{assumption1}\upshape
The system matrices $(A_t, B_t)$ satisfy $(A_t, B_t)\in\Sigma_p$  for all $t\in\mathbb{N}$, where
\begin{equation}\label{assumption1:equation}
\Sigma_p=\{(A, B):\begin{bmatrix}
I &A &B
\end{bmatrix}
M_{p}
\begin{bmatrix}
I &A &B
\end{bmatrix}^\top\succeq 0 \}
\end{equation}
and $M_p=\begin{bmatrix}M_{p, 11} &M_{p, 12}\\M_{p, 12}^\top &M_{p, 22}\end{bmatrix}$ is a known matrix with $M_{p, 22}\prec 0$ and $M_{p, 11}-M_{p, 12}M_{p, 22}^{-1}M_{p, 12}^\top\succ 0$.
\end{assum}

\begin{remark}\upshape
Assumption~\ref{assumption1} is an ellipsoidal uncertainty bound which can represent model-based parametric uncertainty derived from first-principles modeling.
For example, if $A_t$ and $B_t$ are centered around a nominal estimate such that $\|\begin{bmatrix}A_t-\bar{A} &B_t-\bar{B}\end{bmatrix}\|\leq \alpha$ with $\alpha\geq 0$, $\bar{A}\in\mathbb{R}^{n\times n}, \bar{B}\in\mathbb{R}^{n\times m}$ for all $t\in\mathbb{N}$, then we can simply choose $M_{p, 11}=\alpha^2 I-\bar{A}\bar{A}^\top -\bar{B}\bar{B}^\top$, $M_{p, 12}=\begin{bmatrix}\bar{A} &\bar{B}\end{bmatrix}$ and $M_{p, 22}=-I$.
In some scenarios, prior knowledge of the time-varying matrices $A_t$ and $B_t$ is given in polytopic form \cite{kothare1996robust}. 
This polytopic uncertainty can be represented under Assumption~\ref{assumption1} by constructing an outer ellipsoidal approximation of the polytopic set.
\end{remark}

We define the time-varying perturbation of the system matrices between time $t_1$ and time $t_2$ as
\[\Delta A_{t_2}^{t_1}=A_{t_1}-A_{t_2}, \Delta B_{t_2}^{t_1}=B_{t_1}-B_{t_2},\]
where $t_1, t_2\in\mathbb{N}$.
In the following assumption, we assume that the variations of the system matrices are bounded and the bounds are known.

\begin{assum}\label{assumption2}\upshape
The variation of the system matrices satisfy $(\Delta A_{t}^{t-i},\Delta B_{t}^{t-i})\in\Pi_{i}$ for all $t\in\mathbb{N}, i\in\mathbb{I}_{[1, t]}$, where
\begin{equation}\label{assumption2:equation}\nonumber
    \Pi_{i}\!=\!\left\{\!(\Delta A, \Delta B):\!\!\begin{bmatrix}
        I 
        \!\!&\Delta A
        \!\!&\Delta B
    \end{bmatrix}M_{i}\begin{bmatrix}
        I 
        \!\!&\Delta A
        \!\!&\Delta B
    \end{bmatrix}^\top \!\!\succeq\! 0\right\}
\end{equation}
and $M_{i}\!\!=\!\!\begin{bmatrix}M_{i, 11} &M_{i, 12}\\M_{i, 12}^\top &M_{i, 22}\end{bmatrix}$ are known matrices with $M_{i, 22}\prec 0$ and $M_{i, 11}-M_{i, 12}M_{i, 22}^{-1}M_{i, 12}^\top\succ 0$ for all $i\in\mathbb{N}$.
%Moreover, $\Pi_i\in\Pi_{i+1}, \forall i\in\mathbb{I}_{[1,t]}$.
\end{assum}

Assumption \ref{assumption2} can model different forms of knowledge on the time-varying dynamics, e.g.,
\begin{enumerate}[(i)]
\item Lipschitz continuous dynamics: $M_{i, 11}=L_{i}^2 I$, $M_{i, 12}=0$ and $M_{i, 22}=-I$ with $L_{i}\in\mathbb{R}$ implies that the time-varying dynamics is Lipschitz continuous, i.e., $\|\begin{bmatrix}\Delta A^{t-i}_{t} &\Delta B^{t-i}_{t}\end{bmatrix}\|\leq L_{i}$ for all $t\in\mathbb{N}, i\in\mathbb{I}_{[1, t]}$.
In particular, if the variation of the system matrices satisfies $\|\begin{bmatrix}\Delta A_{t}^{t-i} &\Delta B_{t}^{t-i}\end{bmatrix}\|\leq \beta i$ for all $i\in\mathbb{I}_{[1, t]}$, then we can simply choose $L_{i}=\beta i$.
If the variation of the system matrices is centered around a nominal variation, i.e., $\|\begin{bmatrix}\Delta A^{t-i}_{t}-\Delta \bar{A} &\Delta B^{t-i}_{t}-\Delta \bar{B}\end{bmatrix}\|\leq L_{i}$ with $\Delta \bar{A}\in\mathbb{R}^{n\times n}, \Delta \bar{B}\in\mathbb{R}^{n\times m}$, then we can choose $M_{i, 11}=L_{i}^2 I-\Delta \bar{A}(\Delta \bar{A})^\top -\Delta \bar{B}(\Delta \bar{B})^\top$, $M_{i, 12}=\begin{bmatrix}\Delta \bar{A} &\Delta \bar{B}\end{bmatrix}$ and $M_{i, 22}=-I$. 
\item Periodic dynamics: $M_{i, 11}=0$, $M_{i, 12}=0$ and $M_{i, 22}=-I$ implies that the time-varying dynamics is periodic over a time interval $t_p=i\in\mathbb{N}$, i.e., $\Delta A^{t-t_p}_t=0$ and $\Delta B_{t}^{t-t_p}=0$ for all $t\in\mathbb{N}$.
\end{enumerate}

\begin{remark}\upshape
Assumption~\ref{assumption2} characterizes an ellipsoidal uncertainty bound on the variation of the system matrices. 
The conditions $M_{i, 11}-M_{i, 12}M_{i, 22}^{-1}M_{i, 12}^\top\succ 0$ and $M_{i, 22}\prec 0$ hold for various relevant example, e.g., Lipschitz continuous dynamics as above.
When the system varies periodically over a time interval $t_p$, one can choose $M_{t_p, 11}=\epsilon$ with a small positive value $\epsilon$ instead of $M_{t_p, 11}=0$ to satisfy this condition.
Although this introduces conservatism into the uncertainty description, it is necessary for numerical issues and data-driven characterization of the system matrices in the following section.
\end{remark}

\begin{remark}\upshape
Existing work considers a similar setting as Assumption~\ref{assumption2} on the variations of the system dynamics.
In \cite{liu2023online}, the variations of the system dynamics are bounded; however, only the Lipschitz continuous dynamics with $L_{i}=\beta i$ are considered.
Moreover, \cite{liu2023online} does not assume to have a prior knowledge on the system dynamics and, thus, no guarantees on recursive feasibility are given.
Periodic time-varying systems are considered in \cite{nortmann2023direct}, and \cite{hu2024robust} considers that the system matrices are in a convex hull of a set of vertices. 
Both \cite{nortmann2023direct} and \cite{hu2024robust} assume multiple offline trajectories of the system are available for each periodic time or vertices.
Those two work do not impose any bounds on the variation of the system dynamics and employ online data to improve the knowledge of the system dynamics.
In contrast, Assumption~\ref{assumption1} and \ref{assumption2} in our paper provide a more general framework that extends and unifies the problem setting considered in prior work on LTV systems.
Furthermore, unlike \cite{nortmann2023direct, hu2024robust}, our approach does not require offline data for controller design, but instead operates purely based on online measurements and prior knowledge.
\end{remark}

In the following example, we show how the Assumption~\ref{assumption1} and \ref{assumption2} can be applied to an academic example.
\begin{myexm}\upshape\label{example1}
Let us consider a discrete-time LTV system
in the form of \eqref{system} with 
\begin{equation}\nonumber
\begin{aligned}
A_t&\!=\!\begin{bmatrix}
1.1\!+\!0.2\sin{(\frac{\pi}{6}t)}\!\!\!\!\!\!&0.1 \!\!\!\!\!\!&0\\
0 \!\!\!\!\!\!&0.7\!+\!0.15\sin{(\frac{\pi}{6}t)}\!\!\!\!\!\!&-0.1\\
0 \!\!\!\!\!\!&0 \!\!\!\!\!\!&0.5\!+\!0.22\cos(\frac{\pi}{6}t)\end{bmatrix}, \\
B_t&\!=\!\begin{bmatrix}0.6 &0.1 &0.1\end{bmatrix}^\top
\end{aligned}
\end{equation}
for all $t\in\mathbb{N}$.
Suppose we have prior knowledge of the system dynamics, i.e., 
\[\|\begin{bmatrix}A_t-\bar{A} &B_t-\bar{B}\end{bmatrix}\|\leq \alpha.\]
with $\bar{A}=\begin{bmatrix}1.1 &0.1 &0\\0 &0.7 &-0.1\\ 0 &0 &0.5\end{bmatrix}$, $\bar{B}=\begin{bmatrix}0.6\\0.1\\0.1\end{bmatrix}$ and $\alpha=0.22$.
Then, Assumption~\ref{assumption1} holds with \begin{equation}\nonumber
\begin{aligned}
    M_{p, 11}&\!=\!\alpha^2 I\!-\!\bar{A}\bar{A}^\top \!-\!\bar{B}\bar{B}^\top\!\!=\!\!\begin{bmatrix}
    -1.5316 \!\!\!&-0.13 \!\!\!&-0.06\\
    -0.13 \!\!\!&-0.4616 \!\!\!&0.04\\ 
    -0.06 \!\!\!&0.04 \!\!\!&-0.2116\end{bmatrix},\\
    M_{p, 12}&\!=\!\begin{bmatrix}\bar{A} &\bar{B}\end{bmatrix}\!=\!\begin{bmatrix}1.1 &0.1 &0 &0.6\\0 &0.7 &-0.1 &0.1\\ 0 &0 &0.5 &0.1\end{bmatrix}
\end{aligned}
\end{equation} 
and $M_{p, 22}=-I$.
Furthermore, suppose we know a bound of the variation of the system dynamics, i.e., $\|\begin{bmatrix}\Delta A_{t}^{t-i} &\Delta B_{t}^{t-i}\end{bmatrix}\|\leq \beta i$ with $\beta=\frac{11}{300}\pi$ and we know that the system is periodic over the time interval $t_p=12$.
Then, Assumption~\ref{assumption2} holds with each of the individual matrices
\begin{equation}
\begin{aligned}\nonumber
    M_{i}&\!=\!\!\begin{bmatrix}
        \frac{121}{90000}\pi^2 (i\mod{12})^2I \!\!&0\\
        0 \!\!&-I
    \end{bmatrix}, \forall i\neq 12k, k\in\mathbb{N},\\
    M_{i}&\!=\!\!\begin{bmatrix}
        10^{-8}I \!\!&0\\
        0 \!\!&-I
    \end{bmatrix}, \forall i= 12k, k\in\mathbb{N}.
\end{aligned}
\end{equation}
\end{myexm}

So far, we have assumed that some prior model knowledge is available, as in Assumption~\ref{assumption1} and \ref{assumption2}. 
In the next section, we propose a data-driven characterization of the system dynamics using online collected data to improve our knowledge of the system dynamics. 
We aim to find a time-varying state-feedback control law that stabilizes the origin of the closed-loop system using both prior knowledge and online input-state data.
To evaluate the performance of the controller, we define the quadratic stage cost function
\[\ell(u, x)=\|u\|_R^2+\|x\|_Q^2,\]
where $R, Q\succ 0$.
Furthermore, we consider the following constraints on the closed-loop system
\begin{equation}\label{constraint}
\|C_x x_t+C_u u_t\|\leq 1, \forall t\in\mathbb{N},
\end{equation}
where $C_x\in\mathbb{R}^{n_c\times n}$ and $C_u\in\mathbb{R}^{n_c\times m}$.
This formulation generalizes individual ellipsoidal constraints on the input or state by choosing either $C_x$ or $C_u$ as 0.

\section{Data-Driven Characterization using Online Data}\label{sec:3}

In this section, we propose a method to characterize the set of consistent system matrices using the online input-state data using Assumption \ref{assumption2}.

At time $t$, we assume to have a sequence of input-state data from time $0$ to time $t$ as follows:
\begin{subequations}\label{data}
    \begin{align}
        X_{t}&=\begin{bmatrix}
            x_0 &x_1 &\ldots &x_{t-1} &x_t
        \end{bmatrix},\label{state}\\
        U_{t}&=\begin{bmatrix}
            u_0 &u_1 &\ldots &u_{t-1}
        \end{bmatrix}.\label{input}
    \end{align}
\end{subequations}

%The system dynamics at time $t-i$ can be written as
%\begin{equation}\nonumber
    %x_{t-i+1}=A_{t}x_{t-i}+B_{t}u_{t-i}+\Delta A^{t-i}_t x_{t-i}+\Delta B^{t-i}_t u_{t-i}.
%\end{equation}
The set of $(A, B)$ at time $t$ consistent with the data $x_{t-i}, u_{t-i}, x_{t-i+1}$ and Assumption \ref{assumption2} is defined as
\[\Sigma_{t, i}\!=\!\left\{(A, B): 
\begin{gathered}
    \exists (\Delta A^{t-i}_t, \Delta B^{t-i}_t)\in\Pi_i\text{ s.t. }\\x_{t-i+1}=Ax_{t-i}+Bu_{t-i}\\+\Delta A^{t-i}_t x_{t-i}+\Delta B^{t-i}_t u_{t-i}
\text{ holds}
\end{gathered}\right\}.\]
This set includes all possible system matrices at time $t$ for which there exists a variation of the system matrices $(\Delta A^{t-i}_t, \Delta B^{t-i}_t)$ satisfying Assumption~\ref{assumption2} such that the system dynamics \eqref{system} hold with the data $(x_{t-i}, u_{t-i}, x_{t-i+1})$.

Given the sequence of input-state data \eqref{data}, we obtain $t$ data-driven characterizations of the set of consistent system matrices $(A, B)$ at time $t$ via Assumption~\ref{assumption2}, i.e., $\Sigma_{t, i}, \forall i\in\mathbb{I}_{[1, t]}$.
We define the set of $(A, B)$ at time $t$ that are consistent with the sequence of input-state data \eqref{data} as
\[S_{t}=\{(A, B):(A, B)\in\Sigma_{t,i}, \forall i\in\mathbb{I}_{[1, t]}\},\]
which is the intersection of the sets of system matrices consistent with each data triple $(x_{t-i}, u_{t-i}, x_{t-i+1})$ for all $i\in\mathbb{I}_{[1, t]}$.

In the following lemma, we characterize the set $S_t$ using the data in \eqref{data}.
%based on a technical lemma \cite[Lemma 1]{xie2024dataLPV}.

\begin{mylem}\label{lemma1}\upshape
Suppose Assumption~\ref{assumption2} holds,  and $\begin{bmatrix}x_{t-i}\\ u_{t-i}\end{bmatrix}\neq 0$ for all $i\in\mathbb{I}_{[1, t]}$.
Then, the set $S_t$ is equal to
\begin{equation}\label{S_t}
\left\{\!(A, B)\!:
\begin{gathered}
\begin{bmatrix}
I \!\!&A \!\!&B
\end{bmatrix}
\Pi_{t}(\tau)
\begin{bmatrix}
I \!\!&A \!\!&B
\end{bmatrix}^\top\!\succeq\! 0,  \\
\forall \tau\!=\!(\tau_1, \ldots, \tau_t), \tau_i\!\geq\! 0, i\in\mathbb{I}_{[1, t]}
\end{gathered}\!
\right\}\!,
\end{equation}
where 
\begin{equation}\label{Pi_t}\nonumber
\Pi_{t}(\tau)\!=\!\sum_{i=1}^{t}\tau_i\!
\begin{bmatrix}
I &x_{t-i+1}\\
0 &-x_{t-i}\\
0 &-u_{t-i}
\end{bmatrix}
N_i
\begin{bmatrix}
I &x_{t-i+1}\\
0 &-x_{t-i}\\
0 &-u_{t-i}
\end{bmatrix}^\top.
\end{equation}
and $N_i$ is defined as in \eqref{Ni}.
\begin{figure*}
\begin{equation}\label{Ni}
    N_i=\begin{bmatrix}
        I \!\!&0\\
        \begin{bmatrix}x_{t-i}\\ u_{t-i}\end{bmatrix}^\top\!\!\! M_{i, 22}^{-1}M_{i, 12}^\top \!\!&I
    \end{bmatrix}^\top\!\!\!\begin{bmatrix}
        M_{i, 11}\!-\!M_{i, 12}(M_{i, 22})^{-1}M_{i, 12}^\top &0\\
        0 &(\begin{bmatrix}x_{t-i}\\ u_{t-i}\end{bmatrix}^\top \!\!M_{i, 22}^{-1}\begin{bmatrix}x_{t-i}\\ u_{t-i}\end{bmatrix})^{-1}
    \end{bmatrix}
    \begin{bmatrix}
        I \!\!&0\\
        \begin{bmatrix}x_{t-i}\\ u_{t-i}\end{bmatrix}^\top M_{i, 22}^{-1}M_{i, 12}^\top \!\!&I
    \end{bmatrix}
\end{equation}
\end{figure*}
\end{mylem}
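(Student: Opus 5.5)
The plan is to exploit the fact that $S_t$ is by definition an intersection of the single-sample sets $\Sigma_{t,i}$. This reduces the lemma to a one-sample characterization plus an elementary argument about the multipliers $\tau$. Define for each $i$ the single term
\[
\Pi^{(i)}=\begin{bmatrix} I & x_{t-i+1}\\ 0 & -x_{t-i}\\ 0 & -u_{t-i}\end{bmatrix} N_i \begin{bmatrix} I & x_{t-i+1}\\ 0 & -x_{t-i}\\ 0 & -u_{t-i}\end{bmatrix}^\top,
\]
so that $\Pi_t(\tau)=\sum_i \tau_i \Pi^{(i)}$. The set in \eqref{S_t} equals $\bigcap_{i}\{(A,B): [I\ A\ B]\Pi^{(i)}[I\ A\ B]^\top\succeq 0\}$. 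Indeed, choosing $\tau$ as the $i$-th unit vector gives the inclusion "$\subseteq$". Conversely, a nonnegative combination of positive semidefinite matrices is positive semidefinite. Hence it suffices to prove, for each fixed $i$, that $\Sigma_{t,i}=\{(A,B): [I\ A\ B]\Pi^{(i)}[I\ A\ B]^\top\succeq 0\}$.

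For fixed $i$, abbreviate $z=\begin{bmatrix}x_{t-i}\\ u_{t-i}\end{bmatrix}\neq 0$ and $v=x_{t-i+1}-Ax_{t-i}-Bu_{t-i}$, and write $\Delta=[\Delta A\ \Delta B]$. Then $(A,B)\in\Sigma_{t,i}$ if and only if there is $\Delta\in\Pi_i$ with $\Delta z=v$. First, I complete the square in the definition of $\Pi_i$. With $\Delta_c=-M_{i,12}M_{i,22}^{-1}$ and $S=M_{i,11}-M_{i,12}M_{i,22}^{-1}M_{i,12}^\top\succ 0$, the constraint becomes $(\Delta-\Delta_c)(-M_{i,22})(\Delta-\Delta_c)^\top\preceq S$. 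Equivalently, $\Delta=\Delta_c+S^{1/2}W(-M_{i,22})^{-1/2}$ for some $W$ with $\|W\|\le 1$. Consequently $\Delta z=\Delta_c z+S^{1/2}Ww$ with $w=(-M_{i,22})^{-1/2}z$.

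The key step is that the set $\{Ww:\|W\|\le 1\}$ is exactly the Euclidean ball of radius $\|w\|$. Membership in the ball follows from the operator norm bound. Conversely, any vector $y$ with $\|y\|\le\|w\|$ is reached by the rank-one choice $W=yw^\top/\|w\|^2$, which is well defined since $w\neq 0$ because $z\neq 0$. Hence $(A,B)\in\Sigma_{t,i}$ if and only if $r^\top S^{-1}r\le c$, where $r=v-\Delta_c z$ and $c=-z^\top M_{i,22}^{-1}z>0$. Since $S\succ 0$ and $c>0$, a Schur complement argument shows this is equivalent to $S-rr^\top/c\succeq 0$.

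It remains to identify this matrix with $[I\ A\ B]\Pi^{(i)}[I\ A\ B]^\top$. Note that $[I\ A\ B]\begin{bmatrix} I & x_{t-i+1}\\ 0&-x_{t-i}\\0&-u_{t-i}\end{bmatrix}=[I\ \ v]$. Multiplying the left factor of \eqref{Ni} by $[I\ \ v]^\top$ gives the block column with entries $I$ and $z^\top M_{i,22}^{-1}M_{i,12}^\top+v^\top=r^\top$. Thus $[I\ v]N_i[I\ v]^\top=S+rr^\top(z^\top M_{i,22}^{-1}z)^{-1}=S-rr^\top/c$, as required. I expect the main obstacle to be the exact (not merely sufficient) characterization of the image set $\{\Delta z:\Delta\in\Pi_i\}$, i.e.\ the rank-one construction of $W$ that shows the ball is attained. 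This step also uses $z\neq 0$ and $M_{i,22}\prec 0$. The remaining steps are routine algebra.
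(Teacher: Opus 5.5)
Your proof is correct and follows the same route as the paper. In both, you pass to the virtual disturbance $\Delta z$, characterize its exact range by the quadratic matrix inequality with $N_i$, substitute $v=x_{t-i+1}-Ax_{t-i}-Bu_{t-i}$ to obtain $\Sigma_{t,i}$, and then intersect over $i$.

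The difference is that you prove two steps yourself that the paper cites. For the exact range characterization (completion of squares plus the rank-one contraction $W=yw^\top/\|w\|^2$), the paper invokes \cite[Lemma 1]{xie2024dataLPV}. For the equivalence between the $\forall\tau\geq 0$ condition and the intersection, the paper refers to \cite{berberich2023combining, bisoffi2021trade}.
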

\begin{proof}
Since Assumption~\ref{assumption2} holds, we have $(\Delta A^{t-i}_t, \Delta B^{t-i}_t)\in \Pi_i$. The virtual disturbance is defined as $\omega^{t-i}_t=\begin{bmatrix}\Delta A^{t-i}_t &\Delta B^{t-i}_t\end{bmatrix}\begin{bmatrix}x_{t-i}\\ u_{t-i}\end{bmatrix}$. 
We define the set 
\begin{equation}\label{set1}
    \left\{{\omega^{t-i}_t}^\top:\begin{bmatrix}
        I\\(\Delta A^{t-i}_t)^\top\\(\Delta B^{t-i}_t)^\top
    \end{bmatrix}^\top M_i
    \begin{bmatrix}
        I\\(\Delta A^{t-i}_t)^\top\\(\Delta B^{t-i}_t)^\top
    \end{bmatrix}\succeq 0\right\},
\end{equation}
which includes all possible realizations of ${\omega^{t-i}_t}^\top$ given that Assumption \ref{assumption2} holds. 
Since $\begin{bmatrix}x_{t-i}\\ u_{t-i}\end{bmatrix}\neq 0$ for all $i\in\mathbb{I}_{[1, t]}$, by applying Lemma \cite[Lemma 1]{xie2024dataLPV} for $\tilde{\Delta}=\begin{bmatrix}\Delta A^{t-i}_t &\Delta B^{t-i}_t\end{bmatrix}^\top$, $E=\begin{bmatrix}x_{t-i}\\ u_{t-i}\end{bmatrix}^\top$, $\tilde{G}=M_i$, $\hat{\Delta}={\omega^{t-i}_t}^\top$ and $Y=N_i$, the set in \eqref{set1} is equal to 
\begin{equation}\label{set2}
    \left\{{\omega^{t-i}_t}^\top:\begin{bmatrix}
        I\\{\omega^{t-i}_t}^\top 
    \end{bmatrix}^\top N_i
    \begin{bmatrix}
        I\\{\omega^{t-i}_t}^\top 
    \end{bmatrix}\succeq 0\right\},
\end{equation}
where $N_i$ is defined as in \eqref{Ni}.
The set in \eqref{set2} includes all possible realizations of ${\omega^{t-i}_t}^\top$ with $(\Delta A^{t-i}_t, \Delta B^{t-i}_t)\in\Pi_i$.
Replacing $\omega^{t-i}_t$ by $x_{t-i+1}-Ax_{t-i}-Bu_{t-i}$ in \eqref{set2}, we characterize all possible system matrices $(A, B)$ at time $t$ that are consistent with the data $(x_{t-i}, u_{t-i}, x_{t-i+1})$ and the Assumption~\ref{assumption2}, i.e., 
\begin{equation}\label{set3}
\Sigma_{t, i}\!\!=\!\!\left\{\!\!(A, B)\!\!:\!\!\!
\begin{bmatrix}
I\\ A^\top \\B^\top
\end{bmatrix}^\top\!\!\!
\begin{bmatrix}
I \!\!\!&x_{t-i+1}\\
0 \!\!\!&-x_{t-i}\\
0 \!\!\!&-u_{t-i}
\end{bmatrix}
\!\!N_i\!\!
\begin{bmatrix}
I \!\!\!&x_{t-i+1}\\
0 \!\!\!&-x_{t-i}\\
0 \!\!\!&-u_{t-i}
\end{bmatrix}^\top\!\!\!
\begin{bmatrix}
I\\ A^\top \\B^\top
\end{bmatrix}
\!\!\!\succeq \!0\!
\right\}.
\end{equation}
Using the characterization of $\Sigma_{t, i}$ as in \eqref{set3}, we obtain the characterization of the set $S_t$ as in \eqref{S_t}, analogous to \cite{berberich2023combining, bisoffi2021trade}.
\end{proof}

\begin{remark}\upshape
Lemma~\ref{lemma1} derives a data-driven characterization of the set $S_t$ using the collected data \eqref{data} from time $0$ to time $t$.
The key idea behind the data-driven characterization of the set $\Sigma_{t, i}$ is to obtain all possible values of the virtual disturbance $\omega^{t-i}_t=\Delta A^{t-i}_t x_{t-i}+ \Delta B^{t-i}_tu_{t-i}$ such that $(\Delta A^{t-i}_t, \Delta B^{t-i}_t)\in \Pi_i$. 
We then combine the data-driven characterization of $\Sigma_{t, i}$ for all $i\in\mathbb{I}_{[1, t]}$ using non-negative multipliers $\tau_i$ to derive the data-driven characterization of the set $S_t$.
In some cases, it is meaningful to discard input-state data beyond a specific past horizon, as older data may become uninformative.
For example, in LTV systems with Lipschitz continuous bound on the variation of the system dynamics, where $\|\begin{bmatrix}\Delta A_t^{t-i} &\Delta B_t^{t-i}\end{bmatrix}\|\leq \beta i$ for all $t\in\mathbb{N}$,
the Lipschitz bound becomes excessively large for large $i$, which makes very old data less useful for data-driven characterization.
To address this, we restrict the data-driven characterization of the set $S_t$ using only the most recent input-state data. Specifically, we define the truncated data as
\begin{subequations}\label{data_T}\nonumber
    \begin{align}
        X_{t, T}&=\begin{bmatrix}
            x_{t-T} &x_1 &\ldots &x_{t-1} &x_t
        \end{bmatrix},\label{state_T}\\
        U_{t, T}&=\begin{bmatrix}
            u_{t-T} &u_1 &\ldots &u_{t-1}
        \end{bmatrix},\label{input_T}
    \end{align}
\end{subequations}
where $T$ denotes the chosen data length.
The set $S_t$ is characterized as in \eqref{S_t} using the truncated data with
\begin{equation}\nonumber
\Pi_{t}(\tau)\!=\!\sum_{i=T}^{t}\tau_i\!
\begin{bmatrix}
I &x_{t-i+1}\\
0 &-x_{t-i}\\
0 &-u_{t-i}
\end{bmatrix}
N_i
\begin{bmatrix}
I &x_{t-i+1}\\
0 &-x_{t-i}\\
0 &-u_{t-i}
\end{bmatrix}^\top.
\end{equation}
This formulation allows us to improve computational efficiency by reducing the size of the optimization variables.
However, it is important to note that discarding data is not always desirable. 
For example, for systems with periodic dynamics, older data can still provide useful information about the system behavior at the current time step.
In such cases, it is beneficial to collect longer input-state data.
\end{remark}

\section{Adaptive Data-Driven Min-Max MPC Scheme}\label{sec:4}
In this section, we propose an adaptive data-driven min-max MPC scheme for LTV systems. 
In Section~\ref{sec:4.1}, a data-driven state-feedback controller is designed using prior information on the system dynamics to minimize a worst-case infinite-horizon cost.
This controller can serve as a backup controller to ensure recursive feasibility of the adaptive scheme.
In Section~\ref{sec:4.2}, an adaptive data-driven min-max MPC scheme is proposed by using the online collected data.
We reformulate the adaptive data-driven min-max MPC scheme as an SDP.
A receding-horizon algorithm is proposed to recompute the state-feedback gain at each time step.
Section~\ref{sec:4.3} establishes closed-loop guarantees such as recursive feasibility of the SDP problem, exponential stability of the resulting closed-loop system  and satisfaction of the constraints.

\subsection{Controller Design Using Prior Information}\label{sec:4.1}
As stated in Assumption~\ref{assumption1}, the system matrices $(A_t, B_t)$ at all time steps are assumed to lie within the set $\Sigma_p$.
In this section, we design a controller based on this prior information, which will then be used as part of the adaptive controller in the subsequent sections.

At time $t=0$, given prior knowledge of the system dynamics \eqref{assumption1:equation} in Assumption \ref{assumption1} and an initial state $x_0\in\mathbb{R}^n$, the min-max MPC optimization problem using prior knowledge is formulated as follows:
\begin{subequations}\label{mpc_initial}
\begin{align}
J_\infty^\star(x_0)&=\min_{\bar{u}(0)}\max_{(A, B)\in\Sigma_{p}}\sum_{k=0}^{\infty}\ell(\bar{u}_k(0), \bar{x}_k(0))\label{mpc:initial_obj}\\
\text{s.t.}\quad &\bar{x}_{k+1}(0)=A\bar{x}_k(0)+B\bar{u}_k(0),\label{mpc:initial_con1}\\
&\bar{x}_0(0)=x_0, \label{mpc:initial_con2}\\
    &\|C_x \bar{x}_k(0)+C_u \bar{u}_k(0)\|\leq 1, \forall (A, B)\in\Sigma_p, k\in\mathbb{N}.\label{mpc:initial_con3}
\end{align}
\end{subequations}
In problem \eqref{mpc_initial}, $\bar{u}_k(0)$ and $\bar{x}_k(0)$ are the predicted input and state at time $k$ given the initial state $x_0$ at time $t=0$.
The optimization variable $\bar{u}(0)=\{\bar{u}_0(0), \ldots, \bar{u}_k(0)\}$ with $k\rightarrow \infty$ is a sequence of predicted inputs. 
The objective is to minimize the worst-case cost over all $(A, B)\in\Sigma_p$.
We initialize $\bar{x}_0(0)$ as the measured state $x_0$ in \eqref{mpc:initial_con2}.
The predicted input and state satisfy the constraint for any system dynamics \eqref{mpc:initial_con1} with $(A, B)\in\Sigma_p$.

The min-max MPC problem \eqref{mpc_initial} is intractable because of the min-max formulation of the infinite-horizon cost.
We restrict our attention to find a state-feedback control law in the form $\bar{u}_k(0)=F\bar{x}_k(0)$ for all $k\in\mathbb{N}$.
In the following, we formulate an SDP that derives an upper bound on the optimal cost of \eqref{mpc_initial} and determine a corresponding state-feedback gain.

At time $t=0$, given prior knowledge of the system dynamics \eqref{assumption1:equation} in Assumption \ref{assumption1} and an initial state $x_0\in\mathbb{R}^n$, the SDP is formulated as follows:
\begin{subequations}\label{sdp_initial}
\begin{align}
    &\minimize\limits_{\gamma>0, H\in\mathbb{R}^{n\times n}, L\in\mathbb{R}^{m\times n}, \tau_p\geq 0}\gamma\label{sdp_initial:obj}\\
    \text{s.t. }
    &\begin{bmatrix}1 &x_0^\top\\
x_0 &H\end{bmatrix}\succeq 0, \label{sdp_initial:con1}\\
    &\begin{bmatrix}
        \begin{bmatrix}
            -H \!\!&0\\
            0 \!\!&0
        \end{bmatrix}\!+\Pi_{p} &
        \begin{bmatrix}
            0\\
            H\\
            L
        \end{bmatrix}
        \!\!& 0\\
        \begin{bmatrix}
            0 &H &L^\top
        \end{bmatrix} &-H \!\!&\Phi^\top\\
        0 &\Phi \!\!& -\gamma I
    \end{bmatrix}\prec 0, \label{sdp_initial:con2}\\
    &\begin{bmatrix}
        H &(C_xH+C_uL)^\top\\
        (C_xH+C_uL) &I
    \end{bmatrix}\succeq 0, \label{sdp_initial:con3}
\end{align}
\end{subequations}
where $\Pi_{p}=\tau_p M_{p}, \Phi=\begin{bmatrix}M_R L\\ M_Q H\end{bmatrix}$, $M_R^\top M_R=R$ and $M_Q^\top M_Q=Q$.
We denote the optimal solution of \eqref{sdp_initial} by $\gamma_p^\star, H_p^\star, L_p^\star, \tau_p^\star$.
The corresponding optimal state-feedback gain is given by $F_p^\star=L_p^\star(H_p^\star)^{-1}$ and we define $P_p^\star=\gamma_p^\star (H_p^\star)^{-1}$.

In the following theorem, we show that the SDP problem \eqref{sdp_initial} provides an upper bound on the optimal cost of the problem \eqref{mpc_initial}.
\begin{mythm}
\label{theorem1}\upshape
Given a state $x_0\in\mathbb{R}^n$, suppose Assumption \ref{assumption1} holds, and 
there exist $\gamma>0, H\succ 0$,  $L\in\mathbb{R}^{m\times n}, \tau_p\geq 0$ such that the problem \eqref{sdp_initial} is feasible.
Let $P=\gamma H^{-1}$.
Then, 
\begin{enumerate}[(i)]
    \item the optimal cost of \eqref{mpc:initial_obj}-\eqref{mpc:initial_con2} is guaranteed to be at most $\|x_0\|_{P}^2$ and $\|x_0\|_{P}^2$ is upper bounded by $\gamma$, i.e.,
$J^\star_\infty (x_0)\leq \|x_0\|_{P}^2\leq\gamma$;
    \item the constraint \eqref{mpc:initial_con3} is satisfied by applying $\bar{u}_k(0)=F\bar{x}_k(0)$ with $F=LH^{-1}$ to the system \eqref{mpc:initial_con1} for all $k\in\mathbb{N}$.
\end{enumerate}
\end{mythm}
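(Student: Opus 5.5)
We follow the standard Kothare-type argument: build a quadratic Lyapunov-like bound $V(x)=\|x\|_P^2$ with $P=\gamma H^{-1}$, show it decreases by at least the stage cost along every admissible $(A,B)\in\Sigma_p$ under $u=Fx$, and then sum. Part (ii) will come from invariance of the ellipsoid $\mathcal{E}=\{x:x^\top H^{-1}x\le 1\}$ together with \eqref{sdp_initial:con3}.

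\textbf{Step 1 (initial bound).} By a Schur complement of \eqref{sdp_initial:con1}, using $H\succ 0$, we get $x_0^\top H^{-1}x_0\le 1$. Multiplying by $\gamma$ gives $\|x_0\|_P^2\le\gamma$, and also $x_0\in\mathcal{E}$.

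\textbf{Step 2 (robust decrease).} From \eqref{sdp_initial:con2} we want, for all $(A,B)\in\Sigma_p$,
\[
(A+BF)^\top P(A+BF)-P+Q+F^\top RF\preceq 0 .
\]
First take a Schur complement with respect to the $-\gamma I$ block, which turns $\Phi$ into the term $\gamma^{-1}\Phi^\top\Phi=\gamma^{-1}(L^\top RL+HQH)$. Next take a Schur complement with respect to the $-H$ block; this produces, in the first block row,
\[
\begin{bmatrix}-H&0\\0&0\end{bmatrix}+\tau_pM_p+\begin{bmatrix}0\\H\\L\end{bmatrix}\bigl(H-\gamma^{-1}(L^\top RL+HQH)\bigr)^{-1}\begin{bmatrix}0&H&L^\top\end{bmatrix}\prec 0 .
\]
We then pre/post-multiply by $\begin{bmatrix}I&A&B\end{bmatrix}$ and its transpose. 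Since $\begin{bmatrix}I&A&B\end{bmatrix}M_p\begin{bmatrix}I&A&B\end{bmatrix}^\top\succeq 0$ and $\tau_p\ge 0$, the multiplier term can be dropped (this is an S-procedure step in the direction that needs no losslessness). The result is
\[
-H+(AH+BL)\,W^{-1}(AH+BL)^\top\prec 0,\qquad W=H-\gamma^{-1}(L^\top RL+HQH)\succ 0 .
\]
A further Schur/transpose manipulation converts this to $(A+BF)H(A+BF)^\top\prec W$ in the dual form, i.e. to $H(A+BF)^\top H^{-1}(A+BF)H\preceq W$. Congruence with $H^{-1}$ and scaling by $\gamma$ then gives the desired inequality.

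I expect Step 2 to be the main obstacle. The bookkeeping between the "dual" form (the $\begin{bmatrix}I&A&B\end{bmatrix}$ structure acts on $A^\top$) and the primal Lyapunov inequality must be done carefully. Positivity of $W$ also needs to be extracted from the $2\times2$ lower block of \eqref{sdp_initial:con2}. If the direct route is messy, I would instead invoke the dualization lemma, or write the constraint as $\begin{bmatrix}I\\ A^\top\\ B^\top\end{bmatrix}^\top(\cdot)\begin{bmatrix}I\\ A^\top\\ B^\top\end{bmatrix}$ form, as in \cite{xie2024dataLPV}.

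\textbf{Step 3 (cost bound).} With $\bar u_k=F\bar x_k$ and any fixed $(A,B)\in\Sigma_p$, Step 2 gives $V(\bar x_{k+1})-V(\bar x_k)\le-\ell(\bar u_k,\bar x_k)$. Telescoping yields $\sum_{k=0}^N\ell\le V(x_0)-V(\bar x_{N+1})\le V(x_0)$ for every $N$. Taking the maximum over $\Sigma_p$ and noting that the feedback is a feasible choice in the minimization gives $J_\infty^\star(x_0)\le\|x_0\|_P^2\le\gamma$.

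\textbf{Step 4 (constraints).} Step 2 implies $(A+BF)^\top H^{-1}(A+BF)\preceq H^{-1}$, so $\mathcal{E}$ is invariant for every $(A,B)\in\Sigma_p$. Together with Step 1 this gives $\bar x_k\in\mathcal{E}$ for all $k$. A Schur complement of \eqref{sdp_initial:con3}, followed by congruence with $H^{-1}$, gives
\[
(C_x+C_uF)^\top(C_x+C_uF)\preceq H^{-1}.
\]
Hence $\|C_x\bar x_k+C_u\bar u_k\|^2\le\bar x_k^\top H^{-1}\bar x_k\le 1$ for all $k$, which is (ii).
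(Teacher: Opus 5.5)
Your proposal is correct and follows essentially the paper's route. For (ii), the paper likewise takes the Schur complement of \eqref{sdp_initial:con3} and applies congruence with $H^{-1}=\gamma^{-1}P$ to get $(C_x+C_uF)^\top(C_x+C_uF)\preceq\gamma^{-1}P$, then combines $x_0\in\mathcal{E}$ with robust invariance of $\mathcal{E}$; you derive that invariance directly from the decrease inequality instead of citing \cite{xie2024minmax}. For (i), the paper only refers to \cite{xie2024minmax}, and the Schur-complement/S-procedure/telescoping chain you reconstruct is that argument, which the paper itself carries out in the proof of Theorem~\ref{theorem2}.

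The only blemish is the intermediate expression $(A+BF)H(A+BF)^\top\prec W$ in Step 2, which is not equivalent to the preceding inequality. The form you state immediately afterwards, $H(A+BF)^\top H^{-1}(A+BF)H\prec W$, is the correct one: it and $-H+(AH+BL)W^{-1}(AH+BL)^\top\prec 0$ are the two Schur complements of $\begin{bmatrix}H & (A+BF)H\\ H(A+BF)^\top & W\end{bmatrix}\succ 0$. So the argument goes through.
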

\begin{proof}
The proof of (i) is analogous to that of Theorem~1 in \cite{xie2024minmax}, so we omit the proof here.
%According to Theorem 1 in \cite{xie2024minmax}, we can prove that $\gamma$ is an upper bound of the optimal cost of \eqref{mpc:initial_obj}-\eqref{mpc:initial_con2}.
We now prove that the satisfaction of 
\eqref{sdp_initial:con3} ensures the satisfaction of \eqref{mpc:initial_con3}.

Using the Schur complement, \eqref{sdp_initial:con3} is equivalent to
\begin{equation}\label{input:nominal_lmi4}
        H-(C_x H+C_uL)^\top (C_x H+C_uL)\succeq 0.
\end{equation}
Using $H=\gamma P^{-1}$, $F=LH^{-1}$ and multiplying both sides of \eqref{input:nominal_lmi4} with $\gamma^{-1}P$, the inequality \eqref{input:nominal_lmi4} is equivalent to
\begin{equation}
    \gamma^{-1} P-(C_x+C_uF)^\top (C_x+C_uF)\succeq 0.\label{input:lmi5}
\end{equation}
The following inequality holds by using \eqref{input:lmi5} and $1-\gamma^{-1}\gamma=0$
\begin{equation}\label{input:nominal_lmi1}
    \begin{bmatrix}
        -(C_x+C_uF)^\top (C_x+C_uF)&0\\
        0&1
    \end{bmatrix}-\gamma^{-1}
    \begin{bmatrix}
        -P &0\\
        0 &\gamma
    \end{bmatrix}\succeq 0,
\end{equation}
Pre-multiplying \eqref{input:nominal_lmi1} with $\begin{bmatrix}x^\top &1\end{bmatrix}$ and post-multiplying \eqref{input:nominal_lmi1} with $\begin{bmatrix}x^\top &1\end{bmatrix}^\top$, we obtain
\begin{equation}\label{thm11}
    \begin{bmatrix}
        x\\
        1
    \end{bmatrix}^\top\!\!\!\!\left (\!
    \begin{bmatrix}
        -(C_x\!+\!C_uF)^\top \!(C_x\!+\!C_uF)\!\!\!&0\\
        0\!\!\!&1
    \end{bmatrix}\!\!-\!\gamma^{-1}\!\!\!
    \begin{bmatrix}
        -P \!\!&0\\
        0 \!\!&\gamma
    \end{bmatrix}\!\right )\!\!\!
    \begin{bmatrix}
        x\\
        1
    \end{bmatrix}\!\geq\! 0,
\end{equation}
%Since $P_p^\star$, $\gamma_p^\star$, $F_p^\star$ is the optimal solution of \eqref{sdp_initial}, the inequality \eqref{input:lmi6} holds for $P=P_p^\star$, $\gamma=\gamma_p^\star$, $F=F_p^\star$.
This implies that, for any state $x$ satisfying
\begin{equation}\label{thm12}\nonumber
    \begin{bmatrix}
        x\\
        1
    \end{bmatrix}^\top
    \begin{bmatrix}
        \gamma^{-1} P &0\\
        0 &1
    \end{bmatrix}
    \begin{bmatrix}
        x\\
        1
    \end{bmatrix}
    \geq 0,
\end{equation}
the following inequality holds
\begin{equation}\label{thm13}\nonumber
    \begin{bmatrix}
        x\\
        1
    \end{bmatrix}^\top
    \begin{bmatrix}
        -(C_x+C_uF)^\top (C_x+C_uF) &0\\
        0 &1
    \end{bmatrix}
    \begin{bmatrix}
        x\\
        1
    \end{bmatrix}
    \geq 0.
\end{equation}
Thus, we have 
\begin{equation}\nonumber
\max\limits_{x\in\mathcal{E}}\|(C_x+C_uF)x\|\leq 1,
\end{equation}
where $\mathcal{E}=\{x:\|x\|_{P}^2\leq \gamma\}$.
Similar to the proof of Lemma~1 in \cite{xie2024minmax}, the set $\mathcal{E}$ is a RPI set for the system $x_{t+1}=(A+BF_p^\star)x_t$ with any $(A, B)\in \Sigma_{p}$.
Since $x_0=\bar{x}_0(0)\in\mathcal{E}$, the constraint \eqref{mpc:initial_con3} is fulfilled for all $(A, B)\in \Sigma_{p}$, $k\in\mathbb{N}$.
\end{proof}

\begin{remark}\label{remark5}\upshape
Problem~\eqref{sdp_initial} provides an upper bound on the optimal cost of the problem \eqref{mpc_initial} and a corresponding state-feedback control gain.
Similar to the approach in \cite{xie2024minmax}, we can show that the resulting closed-loop system converges exponentially to the origin and satisfies the constraint by applying the state-feedback gain $F_p^\star$.
The resulting data-driven state-feedback controller can be interpreted as a robust controller for a system with parametric uncertainty.
\end{remark}

\begin{remark}\label{remark6}\upshape
The controller design method in this subsection follows that of \cite{xie2024minmax} with a key difference that a more general input and state constraint is considered.
In \cite{kothare1996robust}, a robust controller design based on LMIs for LPV systems using model knowledge is propose. However, while \cite{kothare1996robust} addresses polytopic parametric uncertainty, the controller design in this subsection considers ellipsoidal parametric uncertainty.
\end{remark}

\subsection{Adaptive Data-Driven Min-Max MPC}\label{sec:4.2}

The performance of the state-feedback control law designed using prior information may be suboptimal, as it is intended to stabilize the system over the parametric uncertainties within the set $\Sigma_p$. 
Since the system dynamics are time-varying, it is desirable to obtain more accurate information about the system dynamics at the current time step and design a time-varying controller accordingly.
In this subsection, we present an adaptive data-driven min-max MPC scheme using online collected data to improve closed-loop performance.

At time $t$, we collect a sequence of input-state data from time $0$ to time $t$ as in \eqref{data}.
Recall that the set of system matrices at time $t$ consistent with the data \eqref{data} is denoted by $S_{t}$, which is characterized using Lemma~\ref{lemma1}.
The set of system matrices at time $t$ consistent with both the prior knowledge and the online input-state data is denoted as $S_t\cap \Sigma_{p}$.
Given $S_t\cap \Sigma_p$ and the current state $x_t\in\mathbb{R}^n$, the adaptive data-driven min-max MPC optimization problem is formulated as follows:

\begin{subequations}\label{mpc_onestep}
\begin{align}
J^\star(x_t)\!=\!&\min_{\bar{u}_0(t)=F \bar{x}_0(t)}\!\max_{(A, B)\in S_t\cap\Sigma_{p}}\!\!\!\ell(\bar{u}_0(t), \bar{x}_0(t))\!+\!\|\bar{x}_1(t)\|_{P_p^\star}^2\label{mpc:onestep_obj}\\
\text{s.t.}\quad &\bar{x}_1(t)=A\bar{x}_0(t)+B\bar{u}_0(t),\label{mpc:onestep_con1}\\
&\bar{x}_0(t)=x_t,\label{mpc:onestep_con2}\\
&\|C_x \bar{x}_0(t)+C_u \bar{u}_0(t)\|\leq 1.\label{mpc:onestep_con3}
\end{align}
\end{subequations}
In problem \eqref{mpc_onestep}, $\bar{u}_0(t)$ and $\bar{x}_0(t)$ denote the predicted input and state at time $t$, and $\bar{x}_1(t)$ denotes the predicted state at time $t+1$ given the measured state $x_t$.
The objective is to find a state-feedback control law that minimizes the worst-case value of a one-step stage cost plus a terminal cost over all $(A, B)\in S_t\cap \Sigma_p$.
The terminal cost function is defined over the predicted state $\bar{x}_1(t)$ using the matrix $P_p^\star$ obtained from the optimal solution of problem \eqref{sdp_initial}.
We initialize $\bar{x}_0(t)$ as the measured state $x_t$ in \eqref{mpc:onestep_con2}.
In constraint \eqref{mpc:onestep_con3}, the predicted input and state at time $t$ satisfy the ellipsoidal constraint in \eqref{constraint}.

\begin{remark}\upshape
Problem \eqref{mpc_onestep} minimizes a worst-case cost of a one-step stage cost over the set $S_t\cap \Sigma_p$ plus a terminal cost function that serves as an upper bound for the worst-case cost of an infinite horizon from time $t+1$.
In contrast, problem \eqref{mpc_initial} optimizes an infinite-horizon cost over the prior uncertainty set $\Sigma_p$.
By redesigning the controller at each time step to account for the uncertainty set $S_t\cap \Sigma_p$, the closed-loop performance can be improved.
This is because the considered parametric uncertainty in problem \eqref{mpc_onestep} is more accurately characterized and is potentially smaller than that only using the prior information.
In \eqref{mpc:onestep_obj}, the one-step stage cost is used rather than a multi-step stage cost. 
This choice avoids the need for predicting future time-varying system matrices, which would introduce substantial conservatism in the robust formulation.
\end{remark}

Similar to Section~\ref{sec:4.1}, we formulate an SDP to compute an upper bound on the optimal cost of \eqref{mpc_onestep} and a corresponding state-feedback control law.
At time $t$, given $S_t\cap \Sigma_p$, the matrix $P_p^\star$ and the current state $x_t\in\mathbb{R}^n$, the SDP is formulated as follows:
\begin{subequations}\label{sdp_onestep}
\begin{align}
    &\minimize\limits_{\gamma>0, H\in\mathbb{R}^{n\times n}, L\in\mathbb{R}^{m\times n}, \Gamma\in\mathbb{R}^{t+1}}\gamma\label{sdp_onestep:obj}\\
    \text{s.t. }
    &\begin{bmatrix}1 &x_t^\top\\
x_t &H\end{bmatrix}\succeq 0, \label{sdp_onestep:con1}\\
    &\begin{bmatrix}
        \begin{bmatrix}
            -\gamma (P_p^\star)^{-1} \!\!\!&0\\
            0 \!\!\!&0
        \end{bmatrix}\!+\!\Pi_t(\tau)\!+\!\Pi_{p} &
        \begin{bmatrix}
            0\\
            H\\
            L
        \end{bmatrix}
        & 0\\
        \begin{bmatrix}
            0 &H &L^\top
        \end{bmatrix} &-H &\Phi^\top\\
        0 &\Phi & -\gamma I
    \end{bmatrix}\prec 0, \label{sdp_onestep:con2}\\
    &\Gamma\!=\!(\tau_p, \tau_1, \ldots, \tau_t), \tau_p\!\geq\! 0, \tau_i\!\geq\! 0, \forall i\!\in\!\mathbb{I}_{[1, t]},\label{sdp_onestep:con3}\\
    &\begin{bmatrix}
        H &(C_xH+C_uL)^\top\\
        (C_xH+C_uL) &I
    \end{bmatrix}\succeq 0,\label{sdp_onestep:con4}
\end{align}
\end{subequations}
where $\Phi=\begin{bmatrix}M_R L\\ M_Q H\end{bmatrix}$, $M_R^\top M_R=R$, $M_Q^\top M_Q=Q$, $\Pi_{p}=\tau_p M_{p}$ and $\Pi_t(\tau)$ is defined as in Lemma~\ref{lemma1}.
The optimal solution of problem \eqref{sdp_onestep} depends on the current measured state $x_t$, the prior information, and the online collected data.
We denote the optimal solution of \eqref{sdp_onestep} at time $t$ by $\gamma_t^\star, H_t^\star, L_t^\star, \Gamma_t^\star$.
The optimal state-feedback gain is given by $F_t^\star=L_t^\star(H_t^\star)^{-1}$.

The following theorem shows that \eqref{sdp_onestep:obj}-\eqref{sdp_onestep:con3} minimizes an upper bound on the optimal cost of \eqref{mpc:onestep_obj}-\eqref{mpc:onestep_con2}.

\begin{mythm}
\label{theorem2}\upshape
Given a state $x_t\in\mathbb{R}^n$ at time $t$, suppose there exists $\gamma, H, L, \Gamma$ such that \eqref{sdp_onestep:con1}-\eqref{sdp_onestep:con3} is feasible. 
Let $P=\gamma H^{-1}$.
Then, the optimal cost of \eqref{mpc:onestep_obj}-\eqref{mpc:onestep_con2} is guaranteed to be at most $\|x_t\|_{P}^2$ and $\|x_t\|_{P}^2$ is upper bounded by $\gamma$, i.e.,
\[J^\star(x_t)\leq \|x_t\|_{P}^2\leq\gamma.\]
\end{mythm}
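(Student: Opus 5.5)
The plan is to show that the feedback $\bar u_0(t)=F\bar x_0(t)$ with $F=LH^{-1}$ gives, for every $(A,B)\in S_t\cap\Sigma_p$,
\[
\ell(Fx_t,x_t)+\|(A+BF)x_t\|^2_{P_p^\star}\le \|x_t\|_P^2 .
\]
Maximizing over $(A,B)$ and then minimizing over $F$ yields $J^\star(x_t)\le\|x_t\|_P^2$. The bound $\|x_t\|_P^2\le\gamma$ follows directly from \eqref{sdp_onestep:con1}. By a Schur complement, since $H\succ0$ (implied by the $-H$ block of \eqref{sdp_onestep:con2}), \eqref{sdp_onestep:con1} gives $x_t^\top H^{-1}x_t\le 1$. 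Multiplying by $\gamma$ gives $\|x_t\|_P^2\le\gamma$.

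For the main inequality I would first apply the Schur complement to \eqref{sdp_onestep:con2} with respect to the $-\gamma I$ block. This turns the $\Phi$ terms into $\gamma^{-1}\Phi^\top\Phi=\gamma^{-1}(L^\top RL+HQH)$ in the $(2,2)$ block. Next I use the multiplier structure. By Lemma~\ref{lemma1}, every $(A,B)\in S_t$ satisfies $[I\ A\ B]\Pi_t(\tau)[I\ A\ B]^\top\succeq0$ for all $\tau_i\ge0$. By Assumption~\ref{assumption1}, every $(A,B)\in\Sigma_p$ satisfies $[I\ A\ B]\tau_pM_p[I\ A\ B]^\top\succeq0$. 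Hence for $(A,B)\in S_t\cap\Sigma_p$ the sum $\Pi_t(\tau)+\Pi_p$ is nonnegative along $[I\ A\ B]^\top$. This is the standard S-lemma-type step, as in \cite{xie2024minmax} and \cite{berberich2023combining}. The reduced matrix inequality is
\[
\begin{bmatrix}\begin{bmatrix}-\gamma (P_p^\star)^{-1}&0\\0&0\end{bmatrix}+\Pi_t(\tau)+\Pi_p & Z\\ Z^\top & -H+\gamma^{-1}\Phi^\top\Phi\end{bmatrix}\prec0,\qquad Z=\begin{bmatrix}0\\H\\L\end{bmatrix}.
\]
I multiply it on the left by $\mathrm{blkdiag}([I\ A\ B],I)$ and on the right by its transpose. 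The multiplier term is nonnegative and can therefore be dropped, which leaves
\[
\begin{bmatrix}-\gamma(P_p^\star)^{-1} & AH+BL\\ (AH+BL)^\top & -H+\gamma^{-1}\Phi^\top\Phi\end{bmatrix}\prec0 .
\]
A Schur complement on the $(1,1)$ block then gives
\[
H-\gamma^{-1}(AH+BL)^\top P_p^\star(AH+BL)-\gamma^{-1}(L^\top RL+HQH)\succ0 .
\]

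Finally, I pre- and post-multiply by $H^{-1}$ and substitute $L=FH$ and $P=\gamma H^{-1}$. After multiplying through by $\gamma$ this becomes
\[
P-(A+BF)^\top P_p^\star(A+BF)-F^\top RF-Q\succ0 .
\]
Evaluating this quadratic form at $x_t$ gives the claimed per-model bound, and hence the theorem.

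I expect the main obstacle to be the bookkeeping of the multiplier step. I need to check that the outer factor $[I\ A\ B]$ correctly matches the block partition, so that the $Z$ column produces exactly $AH+BL$. I also need to check that the $\gamma$ scaling turns the terminal weight into $P_p^\star$ and not $\gamma P_p^\star$. Everything else is routine Schur-complement manipulation.
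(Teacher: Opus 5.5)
Your proposal is correct and follows essentially the paper's argument: Schur complements on \eqref{sdp_onestep:con2}, the S-procedure step using that $\Pi_t(\tau)+\Pi_p$ is nonnegative along $[I\ A\ B]^\top$ for $(A,B)\in S_t\cap\Sigma_p$, a congruence with $H^{-1}$ to reach $(A+BF)^\top P_p^\star(A+BF)-P+Q+F^\top RF\prec 0$, evaluation at $x_t$, and a Schur complement of \eqref{sdp_onestep:con1} for $\|x_t\|_P^2\le\gamma$. The only difference is bookkeeping. You eliminate just the $-\gamma I$ block and multiply by $\mathrm{blkdiag}([I\ A\ B],I)$, which lands directly on the paper's \eqref{th1:2} with the blocks permuted. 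The paper instead Schur-complements twice to \eqref{M1:1}--\eqref{M1:2}, multiplies by $[I\ A\ B]$, and then reassembles \eqref{th1:2}.
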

\begin{proof}
Applying the Schur complement to the constraint \eqref{sdp_onestep:con2} twice yields the following equivalent inequalities
\begin{subequations}\label{M1}
\begin{align}
    &\begin{bmatrix}
        -\gamma (P_p^\star)^{-1} \!\!\!\!\!\!&0\\
        0 \!\!\!\!\!\!&\begin{bmatrix}H\\L\end{bmatrix}\!\!(H-\frac{1}{\gamma}\Phi^\top\Phi)^{-1}\!\!\begin{bmatrix}H\\L\end{bmatrix}^\top\!
    \end{bmatrix}+
\Pi_t(\tau)+\Pi_p\prec 0,\label{M1:1}\\
&-H+\frac{1}{\gamma}\Phi^\top\Phi\prec  0.\label{M1:2}
\end{align}
\end{subequations}
According to  \eqref{assumption1:equation} and \eqref{S_t}, given any $\Gamma$ satisfying \eqref{sdp_onestep:con3}, the inequalities
\begin{subequations}\label{C1}
\begin{align}
&\begin{bmatrix}
I &A &B
\end{bmatrix}
\Pi_p
\begin{bmatrix}
I &A &B
\end{bmatrix}^\top\succeq 0,\\
&\begin{bmatrix}
I &A &B
\end{bmatrix}
\Pi_t(\tau)
\begin{bmatrix}
I &A &B
\end{bmatrix}^\top\succeq 0
\end{align}
\end{subequations}
hold for any $(A, B)\in S_t\cap \Sigma_p$.
Pre-multiplying \eqref{M1:1} with $\begin{bmatrix}I &A &B\end{bmatrix}$ and post-multiplying \eqref{M1:1} with $\begin{bmatrix}I &A &B\end{bmatrix}^\top$, the resulting inequality together with \eqref{C1} imply that the following inequality must hold for any $(A, B)\in S_t\cap \Sigma_p$ using the S-procedure \cite{scherer2000linear}
\begin{equation}\label{X1}\nonumber
    \begin{bmatrix}
        I \\
        A^\top\\
        B^\top
    \end{bmatrix}^\top
    \!\!\!
    \begin{bmatrix}\!
        -\gamma (P_p^\star)^{-1} \!\!\!\!&0\\
        0 \!\!\!\!&\begin{bmatrix}H\\L\end{bmatrix}\!(H-\frac{1}{\gamma}\Phi^\top\Phi)^{-1}\!\begin{bmatrix}H\\L\end{bmatrix}^\top\!
    \end{bmatrix}\!\!
    \begin{bmatrix}
        I\\
        A^\top\\
        B^\top
    \end{bmatrix}\!\!\prec 0.
\end{equation}
This is equivalent to 
\begin{equation}\label{th1:1}
    -\gamma (P_p^\star)^{-1}+(AH+BL)(H-\frac{1}{\gamma}\Phi^\top\Phi)^{-1}(AH+BL)^\top\!\!\prec 0.
\end{equation}
Using the Schur complement, \eqref{th1:1} together with \eqref{M1:2} is equivalent to 
\begin{equation}\label{th1:2}
    \begin{bmatrix}
        -H+\frac{1}{\gamma}\Phi^\top \Phi& (AH+BL)^\top\\
        AH+BL &-\gamma (P_p^\star)^{-1}
    \end{bmatrix}\prec  0.
\end{equation}
Using the Schur complement again, \eqref{th1:2} yields the equivalent inequality
\begin{subequations}\label{th1:3}
\begin{align}
    &(AH+BL)^\top \gamma^{-1}P_p^\star (AH+BL)\!-\!H\!+\!\frac{1}{\gamma}\Phi^\top \Phi\prec  0,\label{th1:31}\\
    & -\gamma P_p^\star\prec  0.\label{th1:32}
\end{align}
\end{subequations}
Let $P=\gamma H^{-1}$ and $F=LH^{-1}$.
Multiplying both sides of \eqref{th1:31} with $\sqrt{\gamma}H^{-1}$, we have
\begin{equation}\label{th1:4}
\begin{aligned}
    (A+BF)^\top P_p^\star (A+BF)-P
+Q+F^\top R F\prec  0.
\end{aligned}
\end{equation}
holds for any $(A, B)\in S_t\cap \Sigma_p$.
Multiplying \eqref{th1:4} by $\bar{x}_0(t)^\top$ and $\bar{x}_0(t)$ from left and right, respectively, the following inequality holds for any $(A, B)\in S_t\cap \Sigma_p$
\begin{equation}\label{th1:6}
\begin{aligned}
    \bar{x}_0(t)^\top \!(A+BF)^\top\! P_p^\star(A+BF)\bar{x}_0(t)-\bar{x}_0(t)^\top \!P \bar{x}_0(t)\\ \leq 
    -\bar{x}_0(t)^\top\! (Q+F^\top\! \!RF)\bar{x}_0(t).
\end{aligned}
\end{equation}
The inequality \eqref{th1:6} implies that the following inequality is satisfied  for the state $\bar{x}_0(t)$, input $\bar{u}_0(t)=F\bar{x}_0(t)$ and $\bar{x}_1(t)$ predicted by the system dynamics \eqref{mpc:onestep_con1} with any $(A, B)\in S_t\cap \Sigma_p$
\begin{equation}\label{Vconstraint}
\|\bar{x}_1(t)\|_{P_p^\star}^2-\|\bar{x}_0(t)\|_{P}^2\leq -\ell(\bar{u}_0(t), \bar{x}_0(t)).
\end{equation}
Since the inequality \eqref{Vconstraint} holds for any $(A, B)\in S_t\cap \Sigma_p$, it also holds for the worst-case value, i.e., we obtain
\begin{equation}\label{worstV}
    \max_{(A, B)\in S_t\cap \Sigma_p}\ell(u_t, x_t)+\|\bar{x}_1(t)\|_{P_p^\star}^2\leq \|\bar{x}_0(t)\|_{P}^2.
\end{equation}
This provides an upper bound on the optimal cost of \eqref{mpc_onestep}.
Since $\bar{x}_0(t)=x_t$ by \eqref{mpc:onestep_con2}, we have $J^\star(x_t)\leq \|x_t\|_P^2$.
Using the Schur complement, $\|x_t\|_{P}^2\leq \gamma$ is equivalent to the inequality \eqref{sdp_onestep:con1}.
In conclusion, given that \eqref{sdp_onestep:con1}-\eqref{sdp_onestep:con3} hold, we have shown that $\gamma$ is an upper bound on the optimal cost of \eqref{mpc:onestep_obj}-\eqref{mpc:onestep_con2}.
\end{proof}

\begin{remark}\upshape
Problem \eqref{sdp_onestep:obj}-\eqref{sdp_onestep:con3} provides an upper bound on the optimal cost of the problem \eqref{mpc:onestep_obj}-\eqref{mpc:onestep_con2} and a corresponding state-feedback control gain. 
Additionally, similar to the proof of Theorem~\ref{theorem1} (ii), the constraint in \eqref{sdp_onestep:con4} ensures that the constraint \eqref{mpc:onestep_con3} is satisfied for the
closed-loop system and the derived state-feedback gain.
The proof for deriving the upper bound on the worst-case cost is inspired by the data-driven min–max MPC approach in \cite{xie2024minmax}.
However, \cite{xie2024minmax} considers an infinite-horizon cost, whereas we consider a one-step stage cost combined with a terminal cost in the adaptive data-driven min-max MPC problem.
This difference leads to a different convex reformulation of the resulting SDP.
\end{remark}

We apply the adaptive data-driven min-max MPC scheme according to Algorithm~1.
At the initial time $t=0$, given the state $x_0$, we solve the optimization problem \eqref{sdp_initial} and obtain the optimal state-feedback gain $F_p^\star$ and the matrix $P_p^\star$. 
The computed input $u_0=F_p^\star x_0$ is implemented.
At the next sampling time $t+1$, we measure the state $x_{t+1}$.
With the collection of online input-state data, we characterize the set $S_t$ by Lemma~\ref{lemma1}. 
We add a new optimization variable $\tau_t$ to \eqref{sdp_onestep:con3} and update the constraint \eqref{sdp_onestep:con2} by incorporating the collected online input-state data
and variable $\tau_t$ into $\Pi_t(\tau)$.
Then we solve the problem \eqref{sdp_onestep} and iterate the above procedure at the next time step.

\begin{algorithm}
  \caption{Adaptive Data-driven Min-Max MPC Scheme.}\label{euclid}
  \begin{algorithmic}[1]
      \State At time $t=0$, measure state $x_0$
      \State Solve the problem \eqref{sdp_initial}
      \State Apply the input $u_0=F_p^\star x_0$
      \State Set $t=t+1$, measure state $x_t$
      \State Update the constraints \eqref{sdp_onestep:con2} and \eqref{sdp_onestep:con3}
      \State Solve the problem \eqref{sdp_onestep}
      \State Apply the input $u_t=F_t^\star x_t$
      \State Set $t=t+1$, measure state $x_t$, go back to 5
  \end{algorithmic}
\end{algorithm}

\begin{remark}\upshape
Algorithm~1 resolves problem~\eqref{sdp_onestep} at every time step. 
At each step, a new optimization variable is introduced into the problem formulation. 
To reduce computational complexity, old input-state data can be discarded, as discussed in Remark~4.
In addition, the repeated solution of problem~\eqref{sdp_onestep} can be stopped once the state converges sufficiently close to the origin. 
In this case, the optimal state-feedback gain $F_p^\star$ can be applied instead to maintain stability while reducing computational effort.
% In Algorithm~2, as more data are collected, the constraints \eqref{lemma2_e2} and \eqref{sdp_onestep_noise:con2} must be updated at each time step.
% As the length of the data sequence increases, the computational complexity grows due to the increasing number of constraints and optimization variables.
% To reduce the computational burden, one may instead use only the most recent input–state data, thereby keeping the number of constraints and optimization variables constant.
\end{remark}

\subsection{Closed-loop Guarantees}\label{sec:4.3}
In the following theorem, we prove recursive feasibility of the problem \eqref{sdp_onestep}.
Then, we show exponential stability and constraint satisfaction of the closed-loop system resulting from the proposed scheme.

\begin{mythm}\label{theorem3}\upshape
Suppose Assumptions \ref{assumption1} and \ref{assumption2} hold, and the problem \eqref{sdp_initial} is feasible for a given initial state $x_0\in\mathbb{R}^n$. Then
\begin{enumerate}[(i)]
\item the optimization problem \eqref{sdp_onestep} is feasible at any time $t\in\mathbb{N}$;
\item the equilibrium $x^s=0$ is exponentially stable for the closed-loop system;
\item the closed-loop trajectory resulting from Algorithm~1 satisfies the constraints, i.e., $\|C_x x_t+C_u u_t\|\leq 1$ for all $t\in\mathbb{N}$.
\end{enumerate}
\end{mythm}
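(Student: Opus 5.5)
Recursive feasibility will come from a candidate built from the prior solution of \eqref{sdp_initial}, namely $(\gamma_p^\star, H_p^\star, L_p^\star)$ with $F_p^\star=L_p^\star(H_p^\star)^{-1}$ and $P_p^\star=\gamma_p^\star(H_p^\star)^{-1}$. The key invariant is that the closed-loop state stays in the ellipsoid $\mathcal{E}=\{x:\|x\|_{P_p^\star}^2\le \gamma_p^\star\}$ at every time $t$.

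\textbf{Step 1: candidate feasibility.} At any $t\ge1$ with $x_t\in\mathcal{E}$, take $\gamma=\gamma_p^\star$, $H=H_p^\star$, $L=L_p^\star$, $\tau_p=\tau_p^\star$ and $\tau_i=0$ for all $i\in\mathbb{I}_{[1,t]}$, so that $\Pi_t(\tau)=0$.
\begin{itemize}
\item Constraint \eqref{sdp_onestep:con1} is equivalent by Schur complement to $x_t^\top (H_p^\star)^{-1}x_t\le 1$, which is exactly $x_t\in\mathcal{E}$.
\item Since $-\gamma_p^\star(P_p^\star)^{-1}=-H_p^\star$, the block matrix in \eqref{sdp_onestep:con2} coincides with that in \eqref{sdp_initial:con2}, so it holds.
\item Constraint \eqref{sdp_onestep:con4} is identical to \eqref{sdp_initial:con3}.
\end{itemize}
The only thing left to show for (i) is $x_t\in\mathcal{E}$ for all $t$, which I will carry inductively together with the stability argument.

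\textbf{Step 2: value decrease.} Let $(\gamma_t^\star,H_t^\star,L_t^\star)$ be optimal at time $t$, with $P_t=\gamma_t^\star (H_t^\star)^{-1}$ and $u_t=F_t^\star x_t$. Under Assumption~\ref{assumption1} and Assumption~\ref{assumption2}, the true pair $(A_t,B_t)$ lies in $\Sigma_p$ and in $S_t$, since each $\Sigma_{t,i}$ contains it by the definition of $\Delta A_t^{t-i}$ and $\Delta B_t^{t-i}$. Hence $(A_t,B_t)\in S_t\cap\Sigma_p$, and inequality \eqref{Vconstraint} from the proof of Theorem~\ref{theorem2} applies to the true system:
\[
\|x_{t+1}\|_{P_p^\star}^2\le \|x_t\|_{P_t}^2-\ell(u_t,x_t)\le \gamma_t^\star-\lambda_{\min}(Q)\|x_t\|^2 .
\]
By optimality and Step 1, $\gamma_t^\star\le\gamma_p^\star$, so $\|x_{t+1}\|_{P_p^\star}^2\le\gamma_p^\star$ and $x_{t+1}\in\mathcal{E}$.

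\textbf{Step 3: the obstacle and the Lyapunov function.} I expect the main obstacle to be that the terminal matrix $P_p^\star$ is fixed while $P_t$ changes, so $\gamma_t^\star$ itself is not obviously a Lyapunov function. I will use $V(x)=\|x\|_{P_p^\star}^2$ instead. The candidate of Step 1, evaluated at the actual state, gives $\gamma=\|x_t\|_{P_p^\star}^2$, still with $H=\gamma(P_p^\star)^{-1}$ and $L=F_p^\star H$. This scaled candidate is feasible for \eqref{sdp_onestep} because:
\begin{itemize}
\item \eqref{sdp_onestep:con2} is homogeneous under the scaling of $\gamma,H,L,\tau_p$;
\item \eqref{sdp_onestep:con4} holds since $\gamma\le\gamma_p^\star$ keeps the ellipsoid inside the constraint-admissible one.
\end{itemize}
Therefore $\gamma_t^\star\le V(x_t)$. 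Note that \eqref{sdp_onestep:con2} is strict, which needs $x_t\neq0$; the case $x_t=0$ is trivial. Combining with Step 2,
\[
V(x_{t+1})\le \|x_t\|_{P_t}^2-\ell(u_t,x_t)\le V(x_t)-\lambda_{\min}(Q)\|x_t\|^2 .
\]
With $\lambda_{\min}(P_p^\star)\|x\|^2\le V(x)\le\lambda_{\max}(P_p^\star)\|x\|^2$, this gives
\[
V(x_{t+1})\le\Bigl(1-\tfrac{\lambda_{\min}(Q)}{\lambda_{\max}(P_p^\star)}\Bigr)V(x_t),
\]
which is exponential stability, proving (ii). The step at $t=0$ is handled the same way using $F_p^\star$ and Theorem~\ref{theorem1}.

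\textbf{Step 4: constraints.} For (iii), at $t\ge1$ constraint \eqref{sdp_onestep:con4} together with \eqref{sdp_onestep:con1} gives $\|(C_x+C_uF_t^\star)x_t\|\le1$. The argument is the one in the proof of Theorem~\ref{theorem1}(ii): the Schur-complement manipulation from \eqref{input:nominal_lmi4} to \eqref{thm11}, applied with $x=x_t\in\{x:\|x\|_{P_t}^2\le\gamma_t^\star\}$. At $t=0$ the claim is Theorem~\ref{theorem1}(ii) itself.
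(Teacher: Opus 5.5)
Your proposal is correct and follows the paper's proof. It uses the same backup candidate from \eqref{sdp_initial} with $\tau_i=0$, the same invariant ellipsoid $\{x:\|x\|_{P_p^\star}^2\le\gamma_p^\star\}$ carried forward via $\|x_{t+1}\|_{P_p^\star}^2\le\gamma_t^\star\le\gamma_p^\star$, the same Lyapunov function $V(x)=\|x\|_{P_p^\star}^2$, and constraint satisfaction via the argument of Theorem~\ref{theorem1}(ii). Your one refinement is the rescaled candidate with $\gamma=\|x_t\|_{P_p^\star}^2$, which is feasible by homogeneity of \eqref{sdp_onestep:con2} and because $\gamma\le\gamma_p^\star$ preserves \eqref{sdp_onestep:con4}. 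It rigorously gives $\|x_t\|_{P_t^\star}^2\le\gamma_t^\star\le V(x_t)$, a step the paper only asserts ``by optimality'' although its unscaled candidate yields just $\gamma_t^\star\le\gamma_p^\star$. You also obtain an explicit contraction factor $1-\lambda_{\min}(Q)/\lambda_{\max}(P_p^\star)$.
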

\begin{proof}
The proof consists of two parts. In Part I, we prove recursive feasibility of the problem \eqref{sdp_onestep}.
In Part II, we prove that the equilibrium $x_s=0$ is exponentially stable for the closed-loop system.
The proof for constraint satisfaction is similar to Theorem~1 and, hence, we omit the proof.

\textbf{Part I: } 
Suppose that the optimization problem \eqref{sdp_initial} is feasible at time $t=0$ with initial state $x_0$.
To prove recursive feasibility, we define a candidate solution of problem \eqref{sdp_onestep} at time $t\in\mathbb{N}$ as follows:
\begin{equation}\label{th3:p11}
    \gamma_t=\gamma_p^\star, H_t=H_p^\star, L_t=L_p^\star, \Gamma_t=(
    \tau_p^\star, 0, \ldots, 0).
\end{equation}
Constraints \eqref{sdp_onestep:con2}-\eqref{sdp_onestep:con4} are trivially satisfied with the defined candidate solution.
We need to prove that the constraint \eqref{sdp_onestep:con1} is feasible with this candidate solution for the future states $x_{t+1}, t\in\mathbb{N}$.

\textbf{Case I (time $t=1$): }%Now we show that the constraint \eqref{sdp_onestep:con1} is feasible with the candidate solution \eqref{th3:p11} at time $t=1$, and we address $t\geq 2$ afterwards.
At time $t=0$, since the constraint \eqref{sdp_initial:con2} is satisfied for $\gamma_p^\star, H_p^\star, L_p^\star, \tau_p^\star$, we have
\begin{equation}
    (A+BF_p^\star)^\top P_p^\star(A+BF_p^\star)-P_p^\star\prec -(Q+{F_p^\star}^\top RF_p^\star)\label{thm3:p1}
\end{equation}
holds for any $(A, B)\in\Sigma_{p}$, compare the proof of \cite[Theorem 1]{xie2024minmax}. 
Multiplying \eqref{thm3:p1} from the left and right by $x_0^\top$ and $x_0$, respectively, we obtain
\begin{equation}
\begin{aligned}
    x_0^\top (A+BF_p^\star)^\top P_p^\star(A+BF_p^\star)x_0-x_0^\top P_p^\star x_0\leq \\
    -x_0^\top(Q+{F_p^\star}^\top RF_p^\star)x_0\label{lemma1:proof1}
\end{aligned}
\end{equation}
holds for any $(A, B)\in\Sigma_{p}$.
Since $(A_0, B_0)\in \Sigma_p$,  $x_1=(A_0+B_0F_p^\star)x_0$ and \eqref{lemma1:proof1} holds for any $(A, B)\in\Sigma_{p}$, we have
\begin{equation}\label{th3:p2}
    x_1^\top P_p^\star x_1-x_0^\top P_p^\star x_0\leq -x_0^\top(Q+{F_p^\star}^\top RF_p^\star)x_0.
\end{equation}
Since \eqref{sdp_initial:con1} is satisfied for $H_p^\star$ and $P_p^\star=\gamma_p^\star (H_p^\star)^{-1}$, using the Schur complement, we have $x_0^\top P_p^\star x_0\leq \gamma_p^\star$, which further gives  
\begin{equation}\nonumber
    x_1^\top P_p^\star x_{1}\overset{\eqref{th3:p2}}{\leq} x_0^\top P_p^\star x_0\leq \gamma_p^\star.
\end{equation}
Using $x_1^\top P_p^\star x_1\leq \gamma_p^\star$, $P_p^\star=\gamma_p^\star (H_p^\star)^{-1}$ and the Schur complement,  the inequality \eqref{sdp_onestep:con1} is feasible with $H_p^\star$ at time $t=1$ given the state $x_1$.
Thus, the problem \eqref{sdp_onestep} is feasible with the candidate solution \eqref{th3:p11} at time $t=1$.

\textbf{Case II (time $t\geq 2$): }%Now we show that the constraint \eqref{sdp_onestep:con1} is feasible with the candidate solution \eqref{th3:p11} at time $t\geq 2$.
At time $t=1$, since the constraints \eqref{sdp_onestep:con2} and  \eqref{sdp_onestep:con3} are satisfied for $\gamma_1^\star, H_1^\star, L_1^\star, \Gamma_1^\star$, we have 
\begin{equation}\label{thm3:p4}
    (A+BF_1^\star)^\top P_p^\star (A+BF_1^\star)-P_1^\star\prec -(Q+{F_1^\star}^\top R F_1^\star)
\end{equation}
holds for any $(A, B)\in S_1\cap \Sigma_p$ by \eqref{th1:4}.
Multiplying \eqref{thm3:p4} by $x_1^\top$ and $x_1$ from the left and right, we have
\begin{equation}
\begin{aligned}
    x_1^\top (A+BF_1^\star)^\top P_p^\star(A+BF_1^\star)x_1-x_1^\top P_1^\star x_1\leq \\
    -x_1^\top(Q+{F_1^\star}^\top RF_1^\star)x_1\label{thm3:p5}
\end{aligned}
\end{equation}
holds for any $(A, B)\in S_1\cap \Sigma_p$.
Since $(A_1, B_1)\in S_1\cap \Sigma_p$, $x_2=(A_1+B_1 F_1^\star)x_1$ and \eqref{thm3:p5} holds for any $(A, B)\in S_1\cap \Sigma_p$, we have
\begin{equation}\label{th3:p6}
    x_2^\top P_p^\star x_2-x_1^\top P_1^\star x_1\leq -x_1^\top(Q+{F_1^\star}^\top RF_1^\star)x_1.
\end{equation}
Since $\gamma_1^\star$ is the optimal solution of the problem \eqref{sdp_onestep} at time $t=1$, while $\gamma_p^\star$ is a feasible solution, we have $\gamma_1^\star\leq \gamma_p^\star$.
Hence, the following inequalities hold
\begin{equation}\label{thm3:111}\nonumber
    x_2^\top P_p^\star x_2\overset{\eqref{th3:p6}}{\leq} x_1^\top P_1^\star x_1\leq \gamma_1^\star\leq \gamma_p^\star.
\end{equation}
Thus, the inequality \eqref{sdp_onestep:con1} is feasible with $H_p^\star$ given the state $x_2$ and the problem \eqref{sdp_onestep} is feasible with the candidate solution \eqref{th3:p11} at time $t=2$.
Applying this argument inductively, we prove that the problem \eqref{sdp_onestep} is feasible with the candidate solution \eqref{th3:p11} for time $t\in\mathbb{I}_{[3, \infty)}$.

\textbf{Part II: }We choose $V(x_t)=\|x_t\|_{P_p^\star}^2$ as the Lyapunov function candidate.
%As shown in Theorem~\ref{theorem2}, $\|x_t\|_P^2$ with any feasible solution of the LMIs \eqref{sdp_onestep:con1}-\eqref{sdp_onestep:con2} is an upper
%bound on the optimal cost of \eqref{mpc:onestep_obj}-\eqref{mpc:onestep_con2}. Thus,  we have $\|x_t\|_{P_p^\star}^2\geq\|x_t\|_{P_t^\star}^2\geq  \ell(u_t, x_t) \geq  \|x_t\|^2_Q$. 
%Moreover, as shown in Part I of the proof, \eqref{th3:p11} is a candidate solution, thus we have $\|x_t\|_{P_p^\star}^2\geq \|x_t\|_{P_t^\star}^2$ by optimality.
%The Lyapunov function $V(x_t)$ is lower bounded by $\|x_t\|_Q^2$ and upper bounded by ...
Since \eqref{th1:4} holds for $P=P_t^\star$ and $F=F_t^\star$, 
\begin{equation}\label{thm3:p7}
    (A+BF_t^\star)^\top P_p^\star (A+BF_t^\star)-P_t^\star\prec -(Q+{F_t^\star}^\top R F_t^\star)
\end{equation}
holds for any $(A, B)\in S_t\cap\Sigma_p$.
Multiplying \eqref{thm3:p7} by $x_t^\top$ and $x_t$ from left and right, and since $(A_t, B_t)\in S_t\cap \Sigma_p$ and $x_{t+1}=(A_t+B_t F_t^\star)x_t$, we have
\begin{equation}\label{thm3:stab}
    x_{t+1}^\top P_p^\star x_{t+1}-x_t^\top P_t^\star x_t\leq -x_t^\top(Q+{F_t^\star}^\top RF_t^\star)x_t.
\end{equation}
Since $x_t^\top P_t^\star x_t\leq x_t^\top P_p^\star x_t$ by optimality and using \eqref{thm3:stab}, we have
\begin{equation}\label{thm3:stab1}
    x_{t+1}^\top P_p^\star x_{t+1}-x_t^\top P_p^\star x_t\leq -x_t^\top(Q+{F_t^\star}^\top RF_t^\star)x_t.
\end{equation}
Since $x_t^\top(Q+{F_t^\star}^\top RF_t^\star)x_t\geq \lambda_{\min}(Q)\|x_t\|^2$ and \eqref{thm3:stab1} holds,
\begin{equation}\nonumber
    x_{t+1}^\top P_p^\star x_{t+1}-x_t^\top P_p^\star x_t\leq -\lambda_{\min}(Q)\|x_t\|^2.
\end{equation}
This shows that $x^s=0$ is exponentially stable for the closed loop.
\end{proof}

\begin{remark}\upshape
Theorem~\ref{theorem3} shows that if the optimization problem~\eqref{sdp_initial} is feasible, then the optimization problem \eqref{sdp_onestep} is recursively feasible.
The idea for proving recursive feasibility is to use the optimal solution of the problem \eqref{sdp_initial} as a candidate solution for problem \eqref{sdp_onestep}.
The state-feedback control law derived from the optimal solution of the problem \eqref{sdp_initial} could be viewed as a backup controller to ensure closed-loop stability and feasibility of the problem \eqref{sdp_onestep}.
The idea for proving exponential stability is to construct a Lyapunov function $V(x_t)=\|x_t\|_{P_p^\star}^2$. 
By using the fact that $P_t^\star$ is an optimal solution of problem \eqref{sdp_onestep} at time $t$, while $P_p^\star$ is only a candidate solution, we can show that the decay of the Lyapunov function is negative, which proves exponential stability of the closed-loop system.
\end{remark}

\begin{remark}\label{theorem_infeasible}\upshape
Theorem~\ref{theorem3} relies on the assumption that the SDP problem \eqref{sdp_initial} is feasible at the initial time, which implies that a stabilizing state-feedback controller can be designed using only the prior information on the system matrices.
The online data is used to further improve closed-loop performance.
However, in some cases, the available prior information may be not informative enough for controller synthesis or the prior parameter uncertainty set is too big, i.e., the SDP problem~\eqref{sdp_initial} is initially infeasible.
A possible method in these cases is to apply randomly chosen inputs during the first few initial time steps to collect online data. 
The collected online data provides information for the system dynamics at each current time step.
By combining online data with the prior information, it may then be possible to design a stabilizing controller. 
The resulting SDP problem is similar to~\eqref{sdp_onestep}, except that the term $-\gamma (P_p^\star)^{-1}$ in constraint~\eqref{sdp_onestep:con2} is replaced by $-H$. 
In this setting, recursive feasibility of the SDP problem cannot be guaranteed. Nevertheless, in some cases, a stabilizing controller can still be obtained if the resulting SDP problem is feasible at all time steps.
We will demonstrate this in the simulation results.
\end{remark}

\section{Adaptive data-driven min-max MPC scheme for noisy systems}\label{sec:5}

In this section, we consider an unknown discrete-time LTV system with process noise
\begin{equation}\label{system_noise}
    x_{t+1}=A_tx_t+B_tu_t+\omega_t,
\end{equation}
where $\omega_t\in\mathbb{R}^n$ denotes the unknown process noise.
All other notations and assumptions remain the same as those for system \eqref{system}, i.e., Assumption \ref{assumption1} and \ref{assumption2} hold.
Further, we assume that the process noise $\omega_t$ satisfies the following assumption.

\begin{assum}\label{assumption3}\upshape
For all $t\in\mathbb{N}$, the noise $\omega_t\in\mathbb{R}^n$ satisfied $\|\omega_t\|_G\leq 1$ for a known matrix $G\succ 0$.
\end{assum}

We characterize the set of consistent system matrices for the system \eqref{system_noise} at time $t$ using the following sequence of input-state data collected from the noisy system \eqref{system_noise}
\begin{subequations}\label{data_noise}
    \begin{align}
        \tilde{X}_{t}&=\begin{bmatrix}
            x_0 &x_1 &\ldots &x_{t-1} &x_t
        \end{bmatrix},\\
        \tilde{U}_{t}&=\begin{bmatrix}
            u_0 &u_1 &\ldots &u_{t-1}
        \end{bmatrix}.
    \end{align}
\end{subequations}

%The system dynamics at time $t-i$ can be written as
%\begin{equation}\nonumber
    %x_{t-i+1}=A_{t}x_{t-i}+B_{t}u_{t-i}+\tilde{\omega}_t^{t-i},
%\end{equation}
%where $\tilde{\omega}^{t-i}_t=\Delta A^{t-i}_t x_{t-i}+\Delta B^{t-i}_t u_{t-i}+\omega_t=\omega^{t-i}_t+\omega_t$ is the virtual disturbance.
The set of $(A, B)$ at time $t$ that are consistent with the sequence of input-state data \eqref{data_noise} is defined as 
\[\tilde{S}_t=\{(A, B):(A, B)\in\tilde{\Sigma}_{t,i}, \forall i\in\mathbb{I}_{[1, t]}\},\]
where 
\[\tilde{\Sigma}_{t, i}\!=\!\left\{(A, B): 
\begin{gathered}
    \exists (\Delta A^{t-i}_t, \Delta B^{t-i}_t)\in\Pi_i\text{ and }\\
   \omega_t\text{ satisfying Assumption~\ref{assumption3}}\text{ s.t. }\\x_{t-i+1}=Ax_{t-i}+Bu_{t-i}+\\
    \Delta A^{t-i}_t x_{t-i}+\Delta B^{t-i}_t u_{t-i}+\omega_t
\text{ holds}
\end{gathered}\right\}.\]

We derive a data-driven characterization of the set $\tilde{S}_t$ based on the data \eqref{data_noise} in the following lemma.

\begin{mylem}\upshape\label{lemma2}
Suppose Assumption~\ref{assumption2} and ~\ref{assumption3} hold, and $\begin{bmatrix}x_{t-i}\\ u_{t-i}\end{bmatrix}\neq 0$ for all $i\in\mathbb{I}_{[1, t]}$.
Then, the set $\tilde{S}_t$ is equal to
\begin{equation}\label{tilde_S_t}
\left\{\!(A, B)\!:
\begin{gathered}
\begin{bmatrix}
I \!\!&A \!\!&B
\end{bmatrix}
\Pi_{t}(O)
\begin{bmatrix}
I \!\!&A \!\!&B
\end{bmatrix}^\top\!\succeq\! 0\\
\forall O=(O_1, \ldots, O_t), O_i \text{ satisfying }\eqref{lemma2_e2}, i\in\mathbb{I}_{[1, t]}
\end{gathered}\!
\right\}\!,
\end{equation}
where 
\begin{equation}\nonumber
\Pi_{t}(O)\!=\!\sum_{i=1}^{t}
\begin{bmatrix}
I &x_{t-i+1}\\
0 &-x_{t-i}\\
0 &-u_{t-i}
\end{bmatrix}
O_i
\begin{bmatrix}
I &x_{t-i+1}\\
0 &-x_{t-i}\\
0 &-u_{t-i}
\end{bmatrix}^\top,
\end{equation}
$O_i=\begin{bmatrix}O_{i, 11} &O_{i, 12}\\O_{i, 21} &O_{i, 22}\end{bmatrix}$, and the following inequalities hold with $N_i$ defined as in \eqref{Ni}
\begin{equation}\label{lemma2_e2}
\begin{aligned}
&\begin{bmatrix}
        O_{i, 11} \!\!&0 \!\!&0 \!\!&O_{i, 12}\\
        0 \!\!&0 \!\!&0 \!\!&0\\
        0 \!\!&0 \!\!&0 \!\!&0\\
        O_{i, 21} \!\!&0 \!\!&0 \!\!&O_{i, 22}
    \end{bmatrix}\!\!-\!\!\lambda_{i, 1}\!
    \begin{bmatrix}
    N_i \!\!&0\\0 \!\!&0
    \end{bmatrix}\!\!-\!\!\lambda_{i, 2}\!\begin{bmatrix}
        G^{-1} \!\!\!&0 \!\!\!&0 \!\!\!&0\\
        0 \!\!\!&0 \!\!\!&0 \!\!\!&0\\
        0 \!\!\!&0 \!\!\!&-I \!\!\!&0\\
        0 \!\!\!&0 \!\!\!&0 \!\!\!&0
    \end{bmatrix}\!\succeq 0,\\
&\text{for some }\lambda_{i, 1}\geq 0, \lambda_{i, 2}\geq 0, \forall i\in\mathbb{I}_{[1, t]}.
\end{aligned}
\end{equation}
\end{mylem}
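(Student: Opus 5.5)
Write the residual $x_{t-i+1}-Ax_{t-i}-Bu_{t-i}=\omega^{t-i}_t+\omega_t$, where $\omega^{t-i}_t=\Delta A^{t-i}_t x_{t-i}+\Delta B^{t-i}_t u_{t-i}$ is the virtual disturbance from the proof of Lemma~\ref{lemma1}. The plan is to characterize all admissible values of the pair $(\omega^{t-i}_t,\omega_t)$ by two quadratic constraints, eliminate the two individual disturbances by an S-procedure step, and obtain a single quadratic matrix inequality in the sum $v_i:=\omega^{t-i}_t+\omega_t$. This inequality is encoded by the multiplier $O_i$ in \eqref{lemma2_e2}. Substituting $v_i=x_{t-i+1}-Ax_{t-i}-Bu_{t-i}$ then yields the set $\tilde\Sigma_{t,i}$, and intersecting over $i$ yields \eqref{tilde_S_t}, exactly as at the end of the proof of Lemma~\ref{lemma1}.

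\textbf{Step 1.} Since $\begin{bmatrix}x_{t-i}\\ u_{t-i}\end{bmatrix}\neq 0$, the proof of Lemma~\ref{lemma1} (via \cite[Lemma 1]{xie2024dataLPV}) shows that the admissible values of ${\omega^{t-i}_t}^\top$ are exactly those with $\begin{bmatrix}I\\ {\omega^{t-i}_t}^\top\end{bmatrix}^\top N_i\begin{bmatrix}I\\ {\omega^{t-i}_t}^\top\end{bmatrix}\succeq 0$. Assumption~\ref{assumption3} gives $\omega_t^\top G\omega_t\leq 1$.

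\textbf{Step 2.} Stack the vector $Z_i=\begin{bmatrix}I & {\omega^{t-i}_t} & \omega_t & v_i\end{bmatrix}^\top$, arranged with the block ordering matching \eqref{lemma2_e2}: first block $I$, then the two noise components, then $v_i$. Each constraint from Step 1 becomes a quadratic form $Z_i^\top(\cdot)Z_i\succeq 0$ built from the padded matrices $\mathrm{diag}(N_i,0)$ and $\mathrm{diag}(G^{-1},0,-I,0)$. For the noise bound, use that $\omega_t^\top G\omega_t\le1$ is equivalent to $G^{-1}-\omega_t\omega_t^\top\succeq 0$ (Schur complement), so it is a matrix-valued QMI of the same type.

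\textbf{Step 3.} If \eqref{lemma2_e2} holds for some $\lambda_{i,1},\lambda_{i,2}\ge 0$, congruence with $Z_i$ gives
\[\begin{bmatrix}I\\ v_i^\top\end{bmatrix}^\top O_i\begin{bmatrix}I\\ v_i^\top\end{bmatrix}\succeq \lambda_{i,1}(\cdot)+\lambda_{i,2}(\cdot)\succeq 0.\]
Hence every true residual satisfies the $O_i$-QMI. Substituting $v_i$ gives $\begin{bmatrix}I&A&B\end{bmatrix}(\cdots)O_i(\cdots)^\top\begin{bmatrix}I&A&B\end{bmatrix}^\top\succeq 0$, and summing over $i$ gives the inclusion $\tilde S_t\subseteq$\eqref{tilde_S_t}.

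\textbf{Step 4 (main obstacle).} For the reverse inclusion, take $(A,B)\notin\tilde\Sigma_{t,j}$ for some $j$. I must exhibit a multiplier $O_j$ satisfying \eqref{lemma2_e2} whose QMI is violated by $v_j$, and set $O_i=0$ for $i\neq j$. This is a lossless S-lemma/separation statement for the sum of two ellipsoidal sets: the set of admissible $v$ is the Minkowski sum of two ellipsoids, which is convex. For a $v_j$ outside it, I would take a separating hyperplane and choose $O_j$ as the corresponding outer ellipsoid tangent there, verifying \eqref{lemma2_e2} through the standard construction of outer ellipsoids for Minkowski sums (with parameter $\lambda_{i,1},\lambda_{i,2}$ trade-off). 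The difficulty is that two constraints are being combined, so S-lossless is not automatic, and the matrix-valued (QMI) rather than scalar setting complicates things. Following \cite{berberich2023combining, bisoffi2021trade}, the statement should be understood as the set defined by all such multipliers equalling the intersection characterization used there, and I would appeal to that equivalence rather than reprove losslessness from scratch.
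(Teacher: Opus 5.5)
\textbf{Verdict: genuine gap in Step 4, and it cannot be closed for the statement as written.}

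Your Steps 1--3 follow the paper's own argument: the S-procedure with multipliers $\lambda_{i,1},\lambda_{i,2}$, then substitution of the residual, then combination over $i$. They correctly show that $\tilde S_t$ is contained in the set \eqref{tilde_S_t}. The reverse inclusion you aim for in Step 4, however, is false.

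Notice that your Step 3 congruence never uses $v_i=\omega^{t-i}_t+\omega_t$. In \eqref{lemma2_e2} the multiplier $O_i$ occupies only the first and fourth blocks, and nothing ties the fourth block to the second and third. Take an arbitrary $v\in\mathbb{R}^n$ and let $a=\omega^{t-i}_t$ be the true virtual disturbance. Congruence with $\begin{bmatrix} I & a & 0 & v\end{bmatrix}$ then yields
\[
\begin{bmatrix} I & v\end{bmatrix}O_i\begin{bmatrix} I & v\end{bmatrix}^\top\succeq\lambda_{i,1}\begin{bmatrix} I & a\end{bmatrix}N_i\begin{bmatrix} I & a\end{bmatrix}^\top+\lambda_{i,2}G^{-1}\succeq 0 .
\]
With $v_i=x_{t-i+1}-Ax_{t-i}-Bu_{t-i}$ we have $\begin{bmatrix}I&A&B\end{bmatrix}\Pi_t(O)\begin{bmatrix}I&A&B\end{bmatrix}^\top=\sum_{i}\begin{bmatrix}I&v_i\end{bmatrix}O_i\begin{bmatrix}I&v_i\end{bmatrix}^\top$. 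Hence the set \eqref{tilde_S_t} is all of $\mathbb{R}^{n\times n}\times\mathbb{R}^{n\times m}$.

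By contrast, $\tilde\Sigma_{t,i}$ is a proper subset. The residual must lie in the bounded Minkowski sum of the $N_i$-ellipsoid and $\{\omega:\|\omega\|_G\le 1\}$. The map $(A,B)\mapsto Ax_{t-i}+Bu_{t-i}$ is onto because the data vector is nonzero.

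The obstruction is also visible directly. The tangent outer ellipsoid you intend to use, written as $W-(v-c)(v-c)^\top\succeq 0$, has $O_{j,22}=-1$. But the fourth diagonal entry of \eqref{lemma2_e2} forces $O_{j,22}\ge 0$.

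Appealing to \cite{berberich2023combining, bisoffi2021trade} cannot repair this, and the paper does not supply this direction either. Its proof replaces the S-procedure condition \eqref{l2p4}, which only needs to hold on $\tilde\omega^{t-i}_t=\omega^{t-i}_t+\omega_t$, by $S\succeq 0$ on the whole stacked space. That is exactly where the coupling is lost; the paper then asserts equality by analogy.

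\textbf{A corrected version.} The multiplier condition must encode the sum. Replace the first matrix in \eqref{lemma2_e2} by $T^\top O_iT$ with $T=\begin{bmatrix} I&0&0\\ 0&1&1\end{bmatrix}$, acting on the blocks $(I,\omega^{t-i}_t,\omega_t)$, and drop the free fourth block. Your Step 3 then holds only for genuine residuals, as intended.

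Step 4 then needs no general lossless S-lemma:
\begin{enumerate}[(i)]
\item Normalize the two QMIs to $W_1-(a-c_1)(a-c_1)^\top\succeq 0$ and $G^{-1}-bb^\top\succeq 0$.
\item For every $p>0$, the ellipsoid with center $c_1$ and shape matrix $(1+p^{-1})W_1+(1+p)G^{-1}$ is certified with multipliers $1+p^{-1}$ and $1+p$. The reason is that the quadratic part in $(a-c_1,b)$ has coefficient matrix $\begin{bmatrix}-p^{-1}&1\\ 1&-p\end{bmatrix}\preceq 0$.
\item The intersection of this family over $p>0$ is exactly the Minkowski sum, since the family is tight in every direction.
\item Choosing $O_j$ from this family and $O_i=0$ for $i\neq j$ excludes every $(A,B)\notin\tilde\Sigma_{t,j}$.
\end{enumerate}
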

\begin{proof}
Using the derivation in the proof of Lemma~\ref{lemma1}, we have $\omega_t^{t-i}=\Delta A^{t-i}_t x_{t-i}+\Delta B^{t-i}_t u_{t-i}$ satisfying
\begin{equation}\label{l2p1}
    \begin{bmatrix}
        I&{\omega^{t-i}_t}                    
    \end{bmatrix} N_i
    \begin{bmatrix}
        I&{\omega^{t-i}_t}                    
    \end{bmatrix}^\top \succeq 0.
\end{equation}
Using Assumption~\ref{assumption3}, we have
\begin{equation}\label{l2p2}
    \begin{bmatrix}
        I&\omega_t                    
    \end{bmatrix} \begin{bmatrix}
        G^{-1} &0 \\
        0 &-I
    \end{bmatrix}
    \begin{bmatrix}
        I&{\omega_t}                    
    \end{bmatrix}^\top \succeq 0.
\end{equation}
We want to find a set that contains the set \[\mathcal{A}_t^{t-i}=\{\tilde{\omega}_t^{t-i}:\omega_t^{t-i} \text{ satisfying }\eqref{l2p1}, \omega_t\text{ satisfying }\eqref{l2p2}\}.\]
The set $\{\tilde{\omega}_t^{t-i}:\begin{bmatrix}
        I&{\tilde{\omega}^{t-i}_t}                 
    \end{bmatrix} \tilde{N}_i
    \begin{bmatrix}
        I&{\tilde{\omega}^{t-i}_t}                
    \end{bmatrix}^\top \succeq 0\}$ with $\tilde{N}_i=\begin{bmatrix}\tilde{N}_{i, 11} &\tilde{N}_{i, 12}\\
\tilde{N}_{i, 21} &\tilde{N}_{i, 22}\end{bmatrix}$ contains the set $\mathcal{A}_t^{t-i}$
if and only if for for every $\omega_t^{t-i}$  satisfying \eqref{l2p1}, $\omega_t$ satisfying \eqref{l2p2}, we have $\begin{bmatrix}
        I&{\tilde{\omega}^{t-i}_t}                 
    \end{bmatrix} \tilde{N}_i
    \begin{bmatrix}
        I&{\tilde{\omega}^{t-i}_t}                
    \end{bmatrix}^\top \succeq 0$.
Using the S-procedure, this condition is true if there exist $\alpha_{i, 1}, \alpha_{i, 2}\geq 0$ such that \eqref{l2p4} holds for every $\omega_t^{t-i}, \omega_t$.
\begin{figure*}
    \begin{equation}\label{l2p4}
        \begin{bmatrix}
            I &\omega_t^{t-i} &\omega_t &\tilde{\omega}_t^{t-i}
        \end{bmatrix}
        \underbrace{\left (\begin{bmatrix}
        \tilde{N}_{i, 11} \!\!&0 \!\!&0 \!\!&\tilde{N}_{i, 12}\\
        0 \!\!&0 \!\!&0 \!\!&0\\
        0 \!\!&0 \!\!&0 \!\!&0\\
        \tilde{N}_{i, 21} \!\!&0 \!\!&0 \!\!&\tilde{N}_{i, 22}
    \end{bmatrix}\!\!-\!\alpha_{i, 1}\!
    \begin{bmatrix}
    N_i \!\!&0\\0 \!\!&0
    \end{bmatrix}\!\!-\!\alpha_{i, 2}\!\begin{bmatrix}
        G^{-1} \!\!\!&0 \!\!\!&0 \!\!\!&0\\
        0 \!\!\!&0 \!\!\!&0 \!\!\!&0\\
        0 \!\!\!&0 \!\!\!&-I \!\!\!&0\\
        0 \!\!\!&0 \!\!\!&0 \!\!\!&0
    \end{bmatrix}\right )}_S
    \begin{bmatrix}
            I &\omega_t^{t-i} &\omega_t &\tilde{\omega}_t^{t-i}
        \end{bmatrix}^\top\!\succeq 0
    \end{equation}
\end{figure*}
The condition \eqref{l2p4} is equivalent to the existence of $\alpha_{i, 1}, \alpha_{i, 2}\geq 0$ such that $S\succeq 0$.
Using the same arguments as the proof of Lemma~\ref{lemma1}, we have
\begin{equation}\label{l2p6}
\tilde{\Sigma}_{t, i}\!\!=\!\!\left\{\!\!(A, B)\!\!:\!\!\!
\begin{bmatrix}
I\\ A^\top \\B^\top
\end{bmatrix}^\top\!\!\!
\begin{bmatrix}
I \!\!\!&x_{t-i+1}\\
0 \!\!\!&-x_{t-i}\\
0 \!\!\!&-u_{t-i}
\end{bmatrix}
\!\!\tilde{N}_i\!\!
\begin{bmatrix}
I \!\!\!&x_{t-i+1}\\
0 \!\!\!&-x_{t-i}\\
0 \!\!\!&-u_{t-i}
\end{bmatrix}^\top\!\!\!
\begin{bmatrix}
I\\ A^\top \\B^\top
\end{bmatrix}
\!\!\!\succeq \!0\!
\right\}
\end{equation}
for any $\tilde{N}_i$ such that $S\succeq 0$ holds.
Using the characterization of $\tilde{\Sigma}_{t, i}$ as in \eqref{l2p6}, we obtain the characterization of the set $\tilde{S}_t$ as in \eqref{tilde_S_t} with $O_i=\tau_i\tilde{N}_i$, $\lambda_{i, 1}=\tau_i\alpha_{i, 1}$, $\lambda_{i, 2}=\tau_i\alpha_{i, 2}$, analogous to \cite{berberich2023combining, bisoffi2021trade}.
%$\hfill\blacksquare$
\end{proof}

\begin{remark}\upshape
The idea of the data-driven characterization of the set $\tilde{S}_t$ is to first derive a QMI for $\tilde{\omega}^{t-i}_t$ using the bounds on the process noise and on the variation of the system matrices.
Using this QMI, the set of consistent system matrices can be characterized in a way analogous to the approach in Lemma~\ref{lemma1}.
Note that the matrix $O_i$ is not computed explicitly in this lemma.
It is later treated as an optimization variable in the corresponding SDP.
\end{remark}

The data-driven min-max MPC optimization problem using prior knowledge for LTV system with process noise is formulated analogous to \eqref{mpc_initial}.
In particular, the nominal system dynamics without process noise is used to predict the future state.
Next, we formulate an SDP problem that accounts for the influence of the process noise.
The idea of the SDP reformulation is related to the approach in \cite{xie2026data} for LTI systems, which is extended to LTV systems in the present paper.

At time $t=0$, given prior knowledge of the system dynamics \eqref{system_noise} in Assumption~\ref{assumption1}, an initial state $x_0\in\mathbb{R}^n$ and a constant $c>\lambda_{\min}(Q)$, the SDP is formulated as follows:
\begin{subequations}\label{sdp_initial_noise}
\begin{align}
    &\minimize\limits_{\gamma>0, H\in\mathbb{R}^{n\times n}, L\in\mathbb{R}^{m\times n}, \tau_p\geq 0}\gamma\label{sdp_initial_noise:obj}\\
    \text{s.t. }
    &\eqref{sdp_initial:con1}, \eqref{sdp_initial:con3}\text{ hold},\\
    &\begin{bmatrix}
        \begin{bmatrix}
            -H+\frac{\gamma}{c} \!\!&0\\
            0 \!\!&0
        \end{bmatrix}\!+\Pi_{p} &
        \begin{bmatrix}
            0\\
            H\\
            L
        \end{bmatrix}
        \!\!& 0\\
        \begin{bmatrix}
            0 &H &L^\top
        \end{bmatrix} &-H \!\!&\Phi^\top\\
        0 &\Phi \!\!& -\gamma I
    \end{bmatrix}\prec 0.\label{sdp_initial_noise:con2}
\end{align}
\end{subequations}
We denote the optimal solution of \eqref{sdp_initial_noise} by $\gamma_p^\star, H_p^\star, L_p^\star, \tau_p^\star$.
The corresponding optimal state-feedback gain is given by $F_p^\star=L_p^\star(H_p^\star)^{-1}$ and we define $P_p^\star=\gamma_p^\star (H_p^\star)^{-1}$.

The constant $c$ is required to establish robust stability of the resulting closed-loop system  by applying the state-feedback gain $F_p^\star$.
The method how to choose $c$ is explained in \cite{xie2026data}.
Similar to the LTI system results in \cite[Theorem 1 and 2]{xie2026data}, problem \eqref{sdp_initial_noise} provides an upper bound on the optimal cost of the problem \eqref{mpc_initial} and a corresponding state-feedback control gain.
The resulting closed-loop system converges exponentially to a RPI set and satisfies the constraint.

We present an adaptive data-driven min-max MPC scheme for LTV systems with process noise using both the prior knowledge and online input-state data to improve closed-loop performance.
Given $\tilde{S}_t\cap\Sigma_p$ and the current state $x_t\in\mathbb{R}^n$ at time $t$, the adaptive data-driven min-max MPC optimization problem is formulated in the same manner as \eqref{mpc_onestep}, except that the maximazation is now over $(A, B)\in\tilde{S}_t\cap\Sigma_p$.
Since this problem is computationally intractable, we formulate an SDP to compute an upper bound on the optimal cost of the adaptive data-driven min-max MPC optimization problem and a corresponding state-feedback control law.

At time $t$, given $\tilde{S}_t\cap \Sigma_p$, the matrix $P_p^\star$ and the current state $x_t\in\mathbb{R}^n$, the SDP is formulated as follows:
\begin{subequations}\label{sdp_onestep_noise}
\begin{align}
    &\minimize\limits_{\gamma>0, H\in\mathbb{R}^{n\times n}, L\in\mathbb{R}^{m\times n}, \tau_p\geq 0, \atop \lambda_{i, 1}, \lambda_{i, 2}, O_i\in\mathbb{R}^{(n+1)\times (n+1)}, i\in\mathbb{I}_{[1, t]}}\gamma\label{sdp_onestep_noise:obj}\\
    \text{s.t. }
    &\eqref{sdp_onestep:con1}, \eqref{sdp_onestep:con4}\text{ hold}, \eqref{lemma2_e2} \text{ hold } \forall i\in\mathbb{I}_{[1, t]}, \label{sdp_onestep_noise:con1}\\
    &\begin{bmatrix}
        \begin{bmatrix}
            -\gamma (P_p^\star)^{-1}+\frac{\gamma}{c} \!\!\!&0\\
            0 \!\!\!&0
        \end{bmatrix}\!+\!\Pi_{p}\!+\!\Pi_t(O) &
        \begin{bmatrix}
            0\\
            H\\
            L
        \end{bmatrix}
        & 0\\
        \begin{bmatrix}
            0 &H &L^\top
        \end{bmatrix} &-H &\Phi^\top\\
        0 &\Phi & -\gamma I
    \end{bmatrix}\prec 0\label{sdp_onestep_noise:con2}.
\end{align}
\end{subequations}
We denote the optimal solution of \eqref{sdp_onestep} at time $t$ by $\gamma_t^\star, H_t^\star, L_t^\star, \Gamma_t^\star$.
The optimal state-feedback gain is given by $F_t^\star=L_t^\star(H_t^\star)^{-1}$.

In the following theorem, we show that problem \eqref{sdp_onestep_noise} provides an upper bound on the optimal cost of the adaptive data-driven min-max MPC problem.

\begin{mythm}
\label{theorem4}\upshape
Given a state $x_t\in\mathbb{R}^n$ at time $t$, suppose there exists $\gamma, H, L, \tau_p, \lambda_{i, 1}, \lambda_{i, 2}, O_i, i\in\mathbb{I}_{[1, t]}$ such that \eqref{sdp_onestep_noise:con1}-\eqref{sdp_onestep_noise:con2} is feasible. 
Let $P=\gamma H^{-1}$.
Then, the optimal cost of the adaptive data-driven min-max MPC problem is guaranteed to be at most $\|x_t\|_{P}^2$ and $\|x_t\|_{P}^2$ is upper bounded by $\gamma$, i.e.,
\[J^\star(x_t)\leq \|x_t\|_{P}^2\leq\gamma.\]
\end{mythm}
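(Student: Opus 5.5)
The argument follows the proof of Theorem~\ref{theorem2}. The changes are that the multiplier $\Pi_t(\tau)$ is replaced by $\Pi_t(O)$ with $O_i$ satisfying \eqref{lemma2_e2}, the consistency set is $\tilde{S}_t\cap\Sigma_p$ from Lemma~\ref{lemma2}, and there is an extra term $\frac{\gamma}{c}$ in the upper-left block of \eqref{sdp_onestep_noise:con2}. The cost is evaluated along the nominal prediction $\bar{x}_1(t)=A\bar{x}_0(t)+B\bar{u}_0(t)$, as stated for the noisy formulation, so the noise enters only through the data-consistency set.

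\emph{Step 1 (Schur complements).} I apply the Schur complement twice to \eqref{sdp_onestep_noise:con2}, exactly as in \eqref{M1}. This gives $-H+\frac{1}{\gamma}\Phi^\top\Phi\prec 0$ together with
\[
\begin{bmatrix}-\gamma (P_p^\star)^{-1}+\frac{\gamma}{c}I & 0\\ 0 & \begin{bmatrix}H\\L\end{bmatrix}(H-\frac{1}{\gamma}\Phi^\top\Phi)^{-1}\begin{bmatrix}H\\L\end{bmatrix}^\top\end{bmatrix}+\Pi_p+\Pi_t(O)\prec 0 .
\]

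\emph{Step 2 (S-procedure).} By Assumption~\ref{assumption1}, every $(A,B)\in\Sigma_p$ satisfies $[I\ A\ B]\Pi_p[I\ A\ B]^\top\succeq 0$. By Lemma~\ref{lemma2}, every $(A,B)\in\tilde{S}_t$ satisfies $[I\ A\ B]\Pi_t(O)[I\ A\ B]^\top\succeq 0$ for any $O$ satisfying \eqref{lemma2_e2}, and the $O_i$ in the feasible point are such multipliers. Pre- and post-multiplying the inequality of Step 1 by $[I\ A\ B]$ and its transpose therefore yields, for all $(A,B)\in\tilde{S}_t\cap\Sigma_p$,
\[
-\gamma(P_p^\star)^{-1}+\tfrac{\gamma}{c}I+(AH+BL)(H-\tfrac1\gamma\Phi^\top\Phi)^{-1}(AH+BL)^\top\prec 0 .
\]
Since $\frac{\gamma}{c}I\succ 0$, we may drop this term and recover \eqref{th1:1}.

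\emph{Step 3 (conclusion).} The chain \eqref{th1:2}--\eqref{th1:4} now goes through verbatim. It gives $(A+BF)^\top P_p^\star(A+BF)-P+Q+F^\top RF\prec 0$ for all $(A,B)\in\tilde{S}_t\cap\Sigma_p$, with $F=LH^{-1}$ and $P=\gamma H^{-1}$. Evaluating this at $\bar{x}_0(t)=x_t$ gives the analogue of \eqref{Vconstraint}, and maximizing over $\tilde{S}_t\cap\Sigma_p$ gives $J^\star(x_t)\le\|x_t\|_P^2$. Finally, \eqref{sdp_onestep:con1} is, by the Schur complement, equivalent to $\|x_t\|_P^2\le\gamma$.

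The main obstacle is Step 2, namely confirming that a feasible $O$ is a valid multiplier in Lemma~\ref{lemma2}, i.e.\ that the consistency set is described for every $O$ satisfying \eqref{lemma2_e2}. This holds as that lemma is stated. The $\frac{\gamma}{c}$ term only tightens the condition and is irrelevant here; it matters later for the robust stability analysis.
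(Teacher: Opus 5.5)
Your proof is correct. It differs from the paper's sketch only in how the $\frac{\gamma}{c}$ term is handled.

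\textbf{The paper's route.} The paper keeps that term. Citing the external reference, it converts \eqref{sdp_onestep_noise:con2} together with \eqref{lemma2_e2} into the augmented inequality \eqref{thm3_1}, which carries the noise channel in the blocks $(A+BF)^\top P_p^\star$ and $P_p^\star-cI$. It then reads off \eqref{th1:4} as the $(1,1)$ block and finishes via \eqref{th1:4}--\eqref{worstV}.

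\textbf{Your route.} You discard $\frac{\gamma}{c}I\succ 0$ right after the S-procedure, which is valid since $\gamma>0$ and $c>\lambda_{\min}(Q)>0$. You then reuse the chain of Theorem~\ref{theorem2} verbatim. Nothing is lost for Theorem~\ref{theorem4}. Keeping the term would give
$(A+BF)^\top\bigl((P_p^\star)^{-1}-\tfrac1c I\bigr)^{-1}(A+BF)-P+Q+F^\top RF\prec 0$,
which is \eqref{thm3_1} after a Schur complement. Since $\bigl((P_p^\star)^{-1}-\tfrac1c I\bigr)^{-1}\succ P_p^\star$, dropping the term amounts to keeping only the $(1,1)$ block.

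\textbf{What each route buys.} Yours is shorter and self-contained within the paper. The paper's yields the stronger inequality \eqref{thm3_1}, which is exactly what Theorem~\ref{theorem5} later needs: it multiplies by $[x_1^\top\ \omega_1^\top]$ to obtain \eqref{thm4_2}. Your version therefore would not carry over to the robust stability proof.

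\textbf{One line worth adding.} The noisy MPC problem is defined like \eqref{mpc_onestep}, so it includes \eqref{mpc:onestep_con3}. You should therefore note that $F=LH^{-1}$ is admissible. This follows from \eqref{sdp_onestep:con4} and \eqref{sdp_onestep:con1}, which give $\|(C_x+C_uF)x_t\|^2\le x_t^\top H^{-1}x_t\le 1$, as in Theorem~\ref{theorem1}(ii).
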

\begin{proof}
The proof is similar to \cite[Theorem~1]{xie2026data}, so we only provide a short sketch here. 
Following the arguments of \cite[inequalities (8)-(18)]{xie2026data}, we could show that \eqref{sdp_onestep_noise:con2}, \eqref{lemma2_e2} and $\tau_p\geq 0$ are equivalent to the following inequalities
\begin{equation}\label{thm3_1}
 \begin{bmatrix}
    \!(A\!+\!BF)^\top \!P_p^\star (A\!+\!BF)\!-\!P\!\!+\!\!Q\!\!+\!\!F^\top\! R F  \!\!&(A\!+\!BF)^\top P_p^\star\\ P_p^\star(A\!+\!BF) \!\!&P_p^\star-cI
    \end{bmatrix}\!\!\prec\!  0.
\end{equation}
Then, following the arguments in \eqref{th1:4}-\eqref{worstV}, we prove that $\gamma$ is an upper bound on the adaptive data-driven min-max MPC problem.
\end{proof}

We apply the adaptive data-driven min-max MPC scheme for LTV systems with process noise according to Algorithm~2.
At the initial time $t=0$, given the state $x_0$, we solve the optimization problem \eqref{sdp_initial_noise} and obtain the optimal state-feedback gain $F_p^\star$ and the matrix $P_p^\star$. 
The computed input $u_0=F_p^\star x_0$ is implemented.
At the next sampling time $t+1$, we measure the state $x_{t+1}$.
We add new optimization variables $O_t, \lambda_{t, 1}, \lambda_{t, 2}$ and update $\Pi_t(O)$ in the constraint \eqref{sdp_onestep_noise:con2} and the constraint \eqref{lemma2_e2} using the collected online input-state data.
Then we solve the problem \eqref{sdp_onestep_noise} and iterate the above procedure at the next time step.
% At the initial time $t=0$, we solve the optimization problem \eqref{sdp_initial_noise} and implement only the first computed input $u_0=F_{p}^\star x_0$.
% At the next time step, we measure the state,  update the constraints \eqref{lemma2_e2} and \eqref{sdp_onestep_noise:con2}, and solve the optimization problem \eqref{sdp_onestep_noise}.
% We define $c_{RPI}=\frac{c^2}{\lambda_{\min}(Q)\lambda_{\min}(G)}$.
% If $\|x_t\|_{P_p^\star}^2>c_{RPI}$, we implement the first computed input $u_t=F_{t}^\star x_t$ and repeat this procedure.
% Upon convergence to the RPI set described by $\|x_t\|_{P_p^\star}^2\leq c_{RPI}$, then we define  $\tilde{F}=F_p^\star$ and $\tilde{P} = P_p^\star$, and denote the corresponding time instant by $\tilde{t}$.
% From this time onward, we stop solving the problem \eqref{sdp_onestep_noise} and directly apply $F_p^\star$ to the system.

\begin{algorithm}[htb]
\begin{algorithmic}[1]
\caption{\!Adaptive Data-driven Min-Max MPC Scheme for LTV Systems with Process Noise.\!\!\!}
    %\State \algorithmicrequire{ $U_f, X_f$, $Q, R, S_x, S_u$, $c$, $G$}\;
    \State At time $t=0$, measure state $x_0$\;
    \State Solve the problem \eqref{sdp_initial_noise}\;
    \State Apply the input $u_0=F_{p}^\star x_0$\;
    \State Set $t=t+1$,  measure state $x_t$\;
    \State Update the constraints \eqref{lemma2_e2} and \eqref{sdp_onestep_noise:con2}\;
    \State Solve the problem \eqref{sdp_onestep_noise}\;
    %\If{$\|x_t\|_{P_p^\star}^2>c_{RPI}$}\;
    \State Apply the input $u_t=F_{t}^\star x_t$\;
    \State Set $t=t+1$,  measure state $x_t$, go back to 5\;
    %\ElsIf{$\|x_t\|_{P_p^\star}^2\leq c_{RPI}$}
    %\State Apply the input $u_t=F_p^\star x_t$\;
    %\State Set $t=t+1$,  measure state $x_t$, go back to 12\;
    %\EndIf
    \label{algorithm:robustnoise}
\end{algorithmic}
\end{algorithm}

In the following theorem, we prove recursive feasibility of the problem \eqref{sdp_onestep_noise}, robust stability and constraint satisfaction for the resulting closed-loop system.

\begin{mythm}\label{theorem5}\upshape
Suppose Assumptions~\ref{assumption1}, \ref{assumption2}, and \ref{assumption3} hold,  the problem \eqref{sdp_initial_noise} is feasible for a given initial state $x_0\in\mathbb{R}^n$, 
and $\|x_0\|_{P_p^\star}^2\geq c_{RPI}$ with $c_{RPI}=\frac{c^2}{\lambda_{\min}(Q)\lambda_{\min}(G)}$. Then,
\begin{enumerate}[(i)]
\item the optimization problem is feasible at any time $t\in\mathbb{N}$;
\item the closed-loop system resulting from Algorithm~2 tends exponentially to the set $\mathcal{E}_{RPI}=\{x\in\mathbb{R}^n:\|x\|_{P_p^\star}^2\leq c_{RPI}\}$, i.e., 
\[\|x_{t+1}\|_{P_p^\star}^2-c_{RPI}\leq \beta (\|x_{t}\|_{P_p^\star}^2-c_{RPI})\]
for some $0<\beta<1$, and
stays inside the set $\mathcal{E}_{RPI}$ once the system enters it;
\item the closed-loop trajectory resulting from Algorithm~2 satisfies the constraints, i.e., $\|C_xx_t+C_u u_t\|\leq 1$ for all $t\in \mathbb{N}$.
\end{enumerate}
\end{mythm}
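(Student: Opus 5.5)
We follow the structure of the proof of Theorem~\ref{theorem3}, replacing the nominal decrease inequality by the robust one \eqref{thm3_1} from Theorem~\ref{theorem4}, and working with the Lyapunov function $V(x_t)=\|x_t\|_{P_p^\star}^2$ throughout.

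The first step is to derive a robust decrease inequality for any feasible triple $(P,F,\Sigma)$. Here $P$ is either $P_p^\star$ (for problem \eqref{sdp_initial_noise}, with $\Sigma=\Sigma_p$) or $P_t^\star$ (for problem \eqref{sdp_onestep_noise}, with $\Sigma=\tilde S_t\cap\Sigma_p$). From \eqref{thm3_1}, pre- and post-multiplying by $\begin{bmatrix}x^\top & \omega^\top\end{bmatrix}$ and its transpose gives
\[\|(A+BF)x+\omega\|_{P_p^\star}^2-\|x\|_P^2\le -\|x\|_{Q+F^\top RF}^2+c\|\omega\|^2\]
for all $(A,B)\in\Sigma$. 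By Assumption~\ref{assumption3}, $\|\omega\|^2\le 1/\lambda_{\min}(G)$. We need that the true $(A_t,B_t)$ lies in $\tilde S_t\cap\Sigma_p$. This holds by Assumptions~\ref{assumption1}--\ref{assumption3} and the construction of $\tilde\Sigma_{t,i}$, since the realized $\Delta A_t^{t-i},\Delta B_t^{t-i},\omega$ are admissible witnesses. Consequently
\[x_{t+1}^\top P_p^\star x_{t+1}\le x_t^\top P x_t-\lambda_{\min}(Q)\|x_t\|^2+\frac{c}{\lambda_{\min}(G)}.\]

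Recursive feasibility (i) uses the candidate $(\gamma_p^\star,H_p^\star,L_p^\star,\tau_p^\star)$ together with $O_i=0$ and $\lambda_{i,1}=\lambda_{i,2}=0$. Then \eqref{lemma2_e2} holds trivially, and \eqref{sdp_onestep_noise:con2} reduces to \eqref{sdp_initial_noise:con2} because $\gamma_p^\star (P_p^\star)^{-1}=H_p^\star$. It remains to show \eqref{sdp_onestep:con1}, i.e.\ $\|x_t\|_{P_p^\star}^2\le\gamma_p^\star$ for all $t$, by induction. Optimality gives $x_t^\top P_t^\star x_t\le\gamma_t^\star\le\gamma_p^\star$, and it also gives $x_t^\top P_t^\star x_t\le x_t^\top P_p^\star x_t$, since the candidate is feasible. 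Hence
\[V(x_{t+1})\le V(x_t)-\lambda_{\min}(Q)\|x_t\|^2+\frac{c}{\lambda_{\min}(G)}.\]
We need this to be at most $\gamma_p^\star$, and we split into two cases. If $\|x_t\|^2\ge c/(\lambda_{\min}(Q)\lambda_{\min}(G))$, then $V(x_{t+1})\le V(x_t)\le\gamma_p^\star$. Otherwise $x_t$ is small, and we bound $V(x_{t+1})$ by $\lambda_{\max}(P_p^\star)$ times a small quantity plus the noise term; this bound should land inside $\mathcal E_{RPI}\subseteq\{V\le\gamma_p^\star\}$. The inclusion follows from $\|x_0\|_{P_p^\star}^2\ge c_{RPI}$ together with $\|x_0\|_{P_p^\star}^2\le\gamma_p^\star$ from \eqref{sdp_initial:con1}, which give $c_{RPI}\le\gamma_p^\star$.

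\textbf{Main obstacle.} The small-$x_t$ case is the delicate step. It requires relating $\lambda_{\max}(P_p^\star)$ to $c$ so that $V(x_{t+1})\le c_{RPI}$. The intended route is the bottom-right block $P_p^\star-cI\prec0$ in \eqref{thm3_1}, which gives $\lambda_{\max}(P_p^\star)<c$. I would first try the estimate $\|x_t\|_{P_p^\star}^2\le c\|x_t\|^2$, so that the decrease term satisfies $\lambda_{\min}(Q)\|x_t\|^2\ge\frac{\lambda_{\min}(Q)}{c}V(x_t)$. This yields
\[V(x_{t+1})-c_{RPI}\le\Bigl(1-\frac{\lambda_{\min}(Q)}{c}\Bigr)\bigl(V(x_t)-c_{RPI}\bigr),\]
with $\beta=1-\lambda_{\min}(Q)/c\in(0,1)$ because $c>\lambda_{\min}(Q)$. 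The constant in this affine recursion is $c/\lambda_{\min}(G)$, and its fixed point is $\frac{c}{\lambda_{\min}(G)}\cdot\frac{c}{\lambda_{\min}(Q)}=c_{RPI}$, which confirms the definition of $c_{RPI}$.

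This recursion gives (ii) directly: exponential convergence toward $\mathcal E_{RPI}$, and invariance of $\mathcal E_{RPI}$, since $V(x_t)\le c_{RPI}$ implies $V(x_{t+1})\le c_{RPI}$. It also gives the feasibility bound in both cases at once, because $V(x_{t+1})\le\max\{V(x_t),c_{RPI}\}\le\gamma_p^\star$. So the case split above can be replaced by this single recursion, and the induction closes.

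Finally, (iii) follows as in Theorem~\ref{theorem1}(ii). Constraint \eqref{sdp_onestep:con4} implies $\|(C_x+C_uF_t^\star)x\|\le1$ on $\{x:\|x\|_{P_t^\star}^2\le\gamma_t^\star\}$, and $x_t$ lies in this set by \eqref{sdp_onestep:con1}. The same argument applied to \eqref{sdp_initial:con3} covers $t=0$.
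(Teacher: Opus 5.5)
Your proposal is correct and follows the paper's proof essentially step for step. It uses the same candidate with $O_i=0$ and $\lambda_{i,1}=\lambda_{i,2}=0$, the decrease inequality from \eqref{thm3_1} tested with $[x_t^\top\ \omega_t^\top]$, the bound $\|x\|_{P_p^\star}^2\le c\|x\|^2$ from $P_p^\star\prec cI$, and the same contraction $V(x_{t+1})-c_{RPI}\le\beta(V(x_t)-c_{RPI})$ with $\beta=1-\lambda_{\min}(Q)/c$. Your bound $V(x_{t+1})\le\max\{V(x_t),c_{RPI}\}\le\gamma_p^\star$ simply compresses the paper's two cases.

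One step should be tightened (the paper states it equally tersely): $\|x_t\|_{P_t^\star}^2\le\|x_t\|_{P_p^\star}^2$. Feasibility of the unscaled candidate only gives $\gamma_t^\star\le\gamma_p^\star$. Instead, use the rescaled candidate $(s\gamma_p^\star, sH_p^\star, sL_p^\star, s\tau_p^\star)$ with $s=\|x_t\|_{P_p^\star}^2/\gamma_p^\star\le 1$. It stays feasible because \eqref{sdp_onestep_noise:con2} is homogeneous of degree one in $(\gamma,H,L,\tau_p)$ and \eqref{sdp_onestep:con4} only becomes looser for $s\le 1$, which yields $\|x_t\|_{P_t^\star}^2\le\gamma_t^\star\le\|x_t\|_{P_p^\star}^2$.
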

\begin{proof}
Part I proves recursive feasibility of the problem \eqref{sdp_initial_noise}.
Part II shows robust stability of the set $\mathcal{E}_{RPI}$.
The proof for constraint satisfaction is similar to Theorem~1 and, hence, we omit the proof.
Using the same argument as [inequalities (25)-(26)]\cite{xie2026data}, the upper bound and lower bound of $\|x\|_{P_p^\star}^2$ are $\frac{c}{\lambda_{\min}(Q)}\|x\|_Q^2$ and $\|x\|_Q^2$, respectively.

\textbf{Part I}:
Suppose that the optimization problem \eqref{sdp_initial_noise} is feasible at time $t=0$ with initial state $x_0$.
To prove recursive feasibility, we define a candidate solution of problem \eqref{sdp_onestep_noise} at time $t\in\mathbb{N}$ as follows:
\begin{equation}\label{candi2}
\begin{aligned}
    \gamma_t=\gamma_p^\star, H_t=H_p^\star, L_t=L_p^\star, \tau_p=\tau_p^\star, \\\lambda_{i, 1}=\lambda_{i, 2}=0, O_i=0, \forall i\in\mathbb{I}_{[1, t]}.
\end{aligned}
\end{equation}
Constraint \eqref{sdp_onestep:con4}, \eqref{lemma2_e2} and \eqref{sdp_onestep_noise:con2} are trivially satisfied with the defined candidate solution.
We need to prove that the constraint \eqref{sdp_onestep:con1} is feasible with this candidate solution for the future states $x_{t+1}, t\in\mathbb{N}$.

\textbf{Case I (time $t=1$): }Since the constraint \eqref{sdp_initial_noise:con2} is satisfied for $\gamma_p^\star, H_p^\star, L_p^\star, \tau_p^\star$, we have 
\begin{equation}\label{thm41}
 \begin{bmatrix}
    \!(A\!\!+\!\!BF_p^\star)^\top \!P_p^\star (\star)\!\!-\!\!P_p^\star\!\!+\!\!Q\!\!+\!\!{F_p^\star}^\top\! R F_p^\star  \!\!\!&(A\!\!+\!\!BF_p^\star)^\top P_p^\star\\ P_p^\star(A\!\!+\!\!BF_p^\star) \!\!&P_p^\star\!-\!cI
    \end{bmatrix}\!\!\!\prec  \!0.
\end{equation}
holds for any $(A, B)\in\Sigma_p$.
Then, following the same arguments as in \cite[inequalities (27)-(32)]{xie2026data}, we have $\|x_1\|_{P_p^\star}^2\leq \|x_0\|_{P_p^\star}^2\leq \gamma_p^\star$.
Thus, the problem \eqref{sdp_initial_noise} is feasible with the candidate solution \eqref{candi2} at time $t=1$.

\textbf{Case II (time $t\geq 2$): }
Since the constraints \eqref{sdp_onestep_noise:con2}, \eqref{lemma2_e2} and $\tau_p\geq 0$ are satisfied for the optimal solution at $t=1$, using \eqref{thm3_1}, we have
\begin{equation}\label{thm4_1}
 \begin{bmatrix}
    \!(A\!\!+\!\!BF_1^\star)^\top \!P_p^\star (\star)\!\!-\!\!P_1^\star\!\!+\!\!Q\!\!+\!\!{F_1^\star}^\top\! R F_1^\star  \!\!&(A\!\!+\!\!BF_1^\star)^\top P_p^\star\\ P_p^\star(A\!\!+\!\!BF_1^\star) \!\!&P_p^\star\!-\!cI
    \end{bmatrix}\!\!\!\prec  \!0.
\end{equation}
holds for any $(A, B)\in\tilde{S}_1\cap \Sigma_p$.
Pre- and post-multiplying \eqref{thm4_1} with $\begin{bmatrix}x_1^\top &\omega_1^\top\end{bmatrix}$ and its transpose, respectively, we have
\begin{equation}\label{thm4_2}
\begin{aligned}
    &[(A\!+\!BF_1^\star)x_1\!+\!\omega_1]^\top P_p^\star[(A\!+\!BF_1^\star)x_1\!+\!\omega_1]\!-\!x_1^\top P_1^\star x_1\\
    \leq &-x_1^\top (Q+F_1^{\star\top} R F_1^\star)x_1+c\omega_1^\top \omega_1 
\end{aligned}
\end{equation}
holds for any $(A, B)\in\tilde{S}_1\cap \Sigma_p$.
Since $(A_1, B_1)\in\tilde{S}_1\cap \Sigma_p$ and $x_2=(A_1\!+\!B_1F_1^\star)x_1\!+\!\omega_1$, \eqref{thm4_2} yields
\begin{equation}\label{thm4_3}
    \|x_2\|_{P_p^\star}^2-\|x_1\|_{P_1^\star}^2\leq -x_1^\top (Q+F_1^{\star\top} R F_1^\star)x_1+c\omega_1^\top \omega_1.
\end{equation}
Since $\omega_t^\top \omega_t\leq\frac{1}{\lambda_{\min}(G)}\|\omega_t\|_G^2\leq \frac{1}{\lambda_{\min}(G)}$, \eqref{thm4_3} yields 
\begin{equation}\label{thm4_31}
\begin{aligned}
    \|x_2\|_{P_p^\star}^2-\|x_1\|_{P_1^\star}^2&\leq -x_1^\top (Q+F_1^{\star\top} R F_1^\star)x_1+\frac{c}{\lambda_{\min}(G)}\\
    &\leq -\|x_1\|_Q^2+\frac{c}{\lambda_{\min}(G)}.
\end{aligned}
\end{equation}
Since $P_p^\star$ is only a candidate solution of problem \eqref{sdp_initial_noise} at time $t=1$, while $P_1^\star$ is an optimal solution, we have $\|x_1\|_{P_1^\star}^2\leq \|x_1\|_{P_p^\star}^2$.
Using \eqref{thm4_31}, $\|x_1\|_{P_1^\star}^2\leq \|x_1\|_{P_p^\star}^2$ and the upper bound on $\|x\|_{P_p^\star}^2$, we have
\begin{equation}\label{thm4_32}
\begin{aligned}
    \|x_2\|_{P_p^\star}^2-\|x_1\|_{P_p^\star}^2&\leq -\|x_1\|_Q^2+\frac{c}{\lambda_{\min}(G)}\\
    &\leq -\frac{\lambda_{\min}(Q)}{c}\|x_1\|_{P_p^\star}^2+\frac{c}{\lambda_{\min}(G)}.
    \end{aligned}
\end{equation}
Subtracting $c_{RPI}-\|x_1\|_{P_p^\star}^2$ from both sides of the inequality \eqref{thm4_32}, we have
\begin{equation}\label{theorem2:e4}
\begin{aligned}
    \|x_2\|_{P_p^\star}^2\!\!-\!c_{RPI}\leq\!(1\!-\!\frac{\lambda_{\min}(Q)}{c}\!)(\|x_{1}\|_{P_p^\star}^2\!\!-\!c_{RPI}).
\end{aligned}
\end{equation}
As $c>\lambda_{\min}(Q)$, we derive $1>1-\frac{\lambda_{\min}(Q)}{c}>0$.
If $\|x_{1}\|_{P_p^\star}^2\geq c_{RPI}$, we have 
\begin{equation}\nonumber
    \|x_2\|_{P_p^\star}^2-c_{RPI}\leq \|x_1\|_{P_p^\star}^2-c_{RPI},
\end{equation}
which further gives $\|x_2\|_{P_p^\star}^2\leq \|x_1\|_{P_p^\star}^2\leq \gamma_p^\star$.
If $\|x_{1}\|_{P_p^\star}^2\leq c_{RPI}$, we have $\|x_2\|_{P_p^\star}^2\leq c_{RPI}\leq \|x_0\|_{P_p^\star}^2\leq \gamma_p^\star$.
Thus, the problem \eqref{sdp_initial_noise} is feasible with the candidate solution \eqref{candi2} at time $t=2$.
Applying this argument inductively, we prove the problem \eqref{sdp_onestep} is feasible with the candidate solution \eqref{th3:p11} for time $t\in\mathbb{I}_{[3, \infty]}$.

\textbf{Part II:} We define $\beta=1-\frac{\lambda_{\min}(Q)}{c}$, for which $1>\beta>0$.
Since the arguments in \eqref{thm4_1}-\eqref{theorem2:e4} hold for any time $t$, we have
\begin{equation}\label{theorem2:e41}
    \|x_{t+1}\|_{P_p^\star}^2-c_{RPI}\leq \beta (\|x_{t}\|_{P_p^\star}^2-c_{RPI}),
\end{equation}
which gives that the state converges exponentially to the set $\mathcal{E}_{RPI}$.
When the state enters the RPI set, i.e., $\|x_t\|_{P_p^\star}^2\leq c_{RPI}$, since $\beta>0$ and $\|x_t\|_{P_p^\star}^2\leq c_{RPI}$, \eqref{theorem2:e41} implies that 
$\|x_{t+1}\|_{P_p^\star}\leq c_{RPI}$,
which implies that the state $x_{t+1}$ stays inside the RPI set $\mathcal{E}_{RPI}$.
%\textbf{Part III:} Since the constraints \eqref{sdp_onestep:con1}, \eqref{sdp_onestep:con4} hold for $x_t, H_t^\star, L_t^\star$. Following the same arguments as \eqref{input:nominal_lmi4}-\eqref{thm13}, we obtain that for any state $x$ satisfying
%\begin{equation}\label{thm44}
%    \begin{bmatrix}
%        x\\
%        1
%    \end{bmatrix}^\top
%    \begin{bmatrix}
%        -{\gamma_t^\star}^{-1} P_t^\star &0\\
%        0 &1
%    \end{bmatrix}
%    \begin{bmatrix}
%        x\\
%        1
%    \end{bmatrix}
%    \geq 0,
%\end{equation}
%the following inequality holds
% \begin{equation}\label{thm45}
%     \begin{bmatrix}
%         x\\
%         1
%     \end{bmatrix}^\top
%     \begin{bmatrix}
%         -(C_x+C_uF_t^\star)^\top (C_x+C_uF_t^\star) &0\\
%         0 &1
%     \end{bmatrix}
%     \begin{bmatrix}
%         x\\
%         1
%     \end{bmatrix}
%     \geq 0.
% \end{equation}
% Since the state $x_t$ satisfies \eqref{thm44}, \eqref{thm45} is satisfied for $x=x_t$, which further yields $\|(C_x+C_uF_t^\star)x_t\|=\|C_xx_t+C_uu_t\|\leq 1$.
%Suppose that $t\geq \tilde{t}$. The constraints \eqref{sdp_onestep:con4} hold for $H_p^\star$ and $L_p^\star$. 
%Since the state $x_t$ is inside the RPI set and $c_{RPI}\leq \|x_0\|_{P_p^\star}^2$, it follows that $\|x_t\|_{P_p^\star}^2\leq c_{RPI}\leq \|x_0\|_{P_p^\star}^2\leq \gamma_p^\star$.
%Using arguments analogous to those for the case $t<\tilde{t}$, we can show that $\|(C_x+C_uF_p^\star)x_t\|=\|C_xx_t+C_uu_t\|\leq 1$.
%Thus, the constraint is satisfied for the closed-loop trajectory resulting from Algorithm~2 for any $t\in\mathbb{N}$.
\end{proof}

\begin{remark}\upshape
This section proposes a data-driven min-max MPC scheme for LTV systems with process noise. 
The controller design builds on the techniques from \cite{xie2026data}, i.e., the idea of RPI set and Lyapunov function decay.
However, the technical proof requires several modifications in comparison to \cite{xie2026data}.
For example, the Lyapunov function in the proposed scheme is defined using the initial solution of the SDP problem \eqref{sdp_initial_noise} that uses the prior knowledge on the system dynamics, i.e., $P_p^\star$.
In contrast, in \cite{xie2026data} the Lyapunov function is constructed from the solution of the SDP solved at each time step.
Another difference from \cite{xie2026data} is that in the receding-horizon algorithm in \cite{xie2026data}, the SDP is no longer solved once the system state enters the RPI set. 
In the present work, however, the SDP can still be solved after the state enters the RPI set in order to further improve closed-loop performance, and closed-loop guarantees can be proved due to the different choice of the Lyapunov function.
Moreover, the data-driven characterization in Lemma~\ref{lemma2} must consider the combined effects of the variations of the system dynamics and the process noise, which is different from the LTI case in \cite{xie2026data} that only considers the effect of the process noise. 
\end{remark}

\section{Simulations}\label{sec:6}
In this section, we implement the proposed adaptive data-driven min-max MPC scheme to LTV systems without and with process noise in Sections~\ref{sec:4} and~\ref{sec:5}, respectively, using two examples.
We show that the proposed adaptive data-driven min–max MPC scheme improves closed-loop performance compared to a static state-feedback control law designed using only prior system knowledge.

\subsection{Lipschitz continuous dynamics}\label{sec:6.1}
Consider a discrete-time LTV system
\begin{equation}\label{system_simulation}
x_{t+1}=\begin{bmatrix}
    a_t &0\\
    0 &b_t
\end{bmatrix}x_t+
\begin{bmatrix}
    0.3\\ 0.1
\end{bmatrix}u_t.
\end{equation}
The parameters $a_t$ and $b_t$ are assumed to be bounded and time-varying with bounded rate of variations.
The bounds of the parameters and the rate of variations are known, i.e., 
\begin{equation}\label{bound}
\begin{aligned}
       &0.9\leq a_t\leq 1.3, \|a_{t-i}-a_{t}\|\leq 0.01i,\forall t\in\mathbb{N},\\
       &0.3\leq b_t\leq 0.5, \|b_{t-i}-b_{t}\|\leq 0.01i,\forall t\in\mathbb{N}.
\end{aligned}
\end{equation}
The system matrices $(A_t, B_t)$ satisfy the Assumption~\ref{assumption1} and the prior information is available.
The system matrices $(A_t, B_t)$ satisfy
\begin{equation}\nonumber
\begin{bmatrix}
I \!\!&A_t \!\!&B_t
\end{bmatrix}
M_{p}
\begin{bmatrix}
I \!\!&A_t \!\!&B_t
\end{bmatrix}^\top\!\succeq\! 0, \forall t\in\mathbb{N}
\end{equation}
where $M_p=\begin{bmatrix}M_{p, 11} &M_{p, 12}\\ M_{p, 12}^\top &M_{p, 22}\end{bmatrix}$ with
\begin{equation}\nonumber
\begin{aligned}
&M_{p, 11}\!\!=\!\!\begin{bmatrix}
    -1.26 \!\!&-0.03\\
    -0.03 \!\!&-0.25
\end{bmatrix},\\
&M_{p, 12}\!\!=\!\!\begin{bmatrix}
    1.1 \!\!&0 \!\!&0.3\\
    0 \!\!&0.4 \!\!&0.1
\end{bmatrix}, M_{p, 22}\!\!=\!-I.
\end{aligned}
\end{equation}
The variation of the system matrices admits the following QMI
\begin{equation}\nonumber
\begin{bmatrix}I &\Delta A &\Delta B\end{bmatrix}
M_i \begin{bmatrix}I &\Delta A &\Delta B\end{bmatrix}^\top \succeq 0
\end{equation}
with $M_i=\begin{bmatrix}M_{i, 11} &M_{i, 12}\\ M_{i, 12}^\top &M_{i, 22}\end{bmatrix}$ with
\begin{equation}\nonumber
\begin{aligned}
&M_{i, 11}=\begin{bmatrix}
    10^{-4}\times i^2 &0\\
    0 &10^{-4}\times i^2
\end{bmatrix}, \\
&M_{i, 12}=\begin{bmatrix}
    0 &0 &0\\
    0 &0 &0
\end{bmatrix}, M_{i, 22}=-I.
\end{aligned}
\end{equation}

\begin{figure}
    \centering
    \subfigure[]{\label{pic:state1}
    \includegraphics[width=0.48\textwidth]{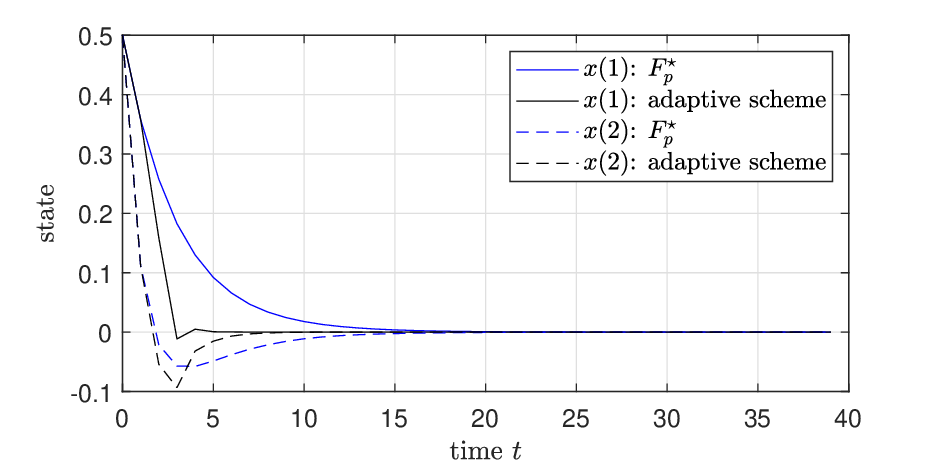}}
    \subfigure[]{\label{pic:state3}
    \includegraphics[width=0.48\textwidth]{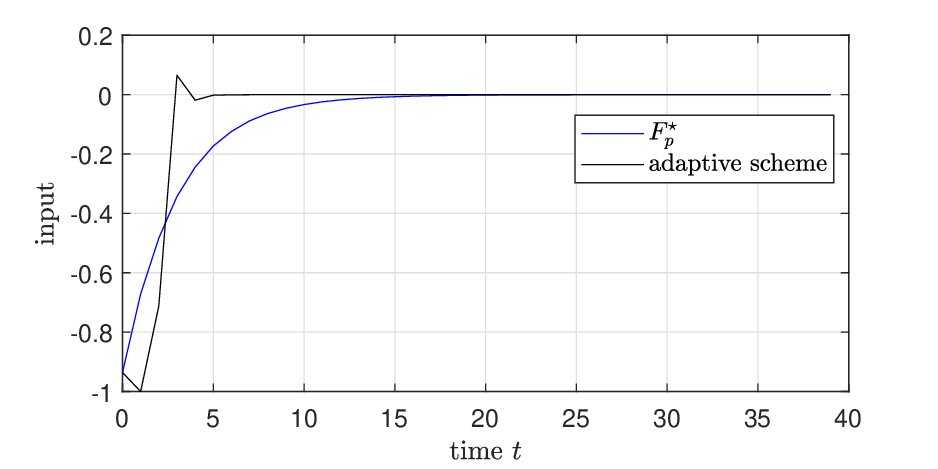}}
    \caption{Closed loop state and input trajectories under the proposed adaptive data-driven min-max MPC scheme and the static state-feedback control $F_p^\star$ for system \eqref{system_simulation} .}
    \label{pic:compare}
\end{figure}

We choose the weight matrices of the scheme as $Q=I, R=0.01I$.
The initial state is chosen as $x_0=[0.5; 0.5]$.
The matrices for constraint \eqref{constraint} are chosen as $C_x=0$ and $C_u=1$.
The system matrices $A_t$ and $B_t$ in closed-loop are drawn uniformly at random such that \eqref{bound} holds.
Figure \ref{pic:compare} illustrates the closed-loop state and input trajectories under the proposed adaptive data-driven scheme and the static state-feedback control law $F_p^\star$ over $40$ iterations.
In the adaptive data-driven scheme, the five most recent online input–state data samples are used for the data-driven characterization, as discussed in Remark~4.
The constraint is satisfied for the closed-loop system resulting from the proposed adaptive data-driven scheme and the static state-feedback control law.
While all the state and input trajectories converge to the origin, the state and input trajectories resulting from the proposed adaptive data-driven scheme converge faster than the trajectories under the static state-feedback control law $F_p^\star$.
The simulation is repeated 10 times, and the corresponding closed-loop costs are calculated.
The closed-loop cost of the proposed adaptive data-driven scheme is on average $18.55\%$ smaller than that of the static state-feedback control law $F_p^\star$.

We now consider the academic system \eqref{system_simulation} with process noise $\omega_t$.
The process noise $\omega_t$ satisfies Assumption~\ref{assumption3} with a known matrix $G=1\times 10^4 I$.
All other parameters are chosen as in the noise-free case. 
We implement the proposed data-driven min-max MPC scheme for LTV systems with process noise in Sec.~\ref{sec:5}.
Figure~\ref{pic:compare_noise} shows the closed-loop state and input trajectories under the proposed scheme and the static state-feedback control law $F_p^\star$.
The closed-loop trajectories converge to a neighborhood of the origin and satisfy the constraint.
The simulation is repeated 10 times, and the average closed-loop cost across these runs is computed.
The average closed-loop cost of the adaptive
data-driven min–max MPC scheme is $11.45\%$ smaller than that of the static state-feedback control law $F_p^\star$.

\begin{figure}
    \centering
    \subfigure[]{\label{pic:state1_noise}
    \includegraphics[width=0.48\textwidth]{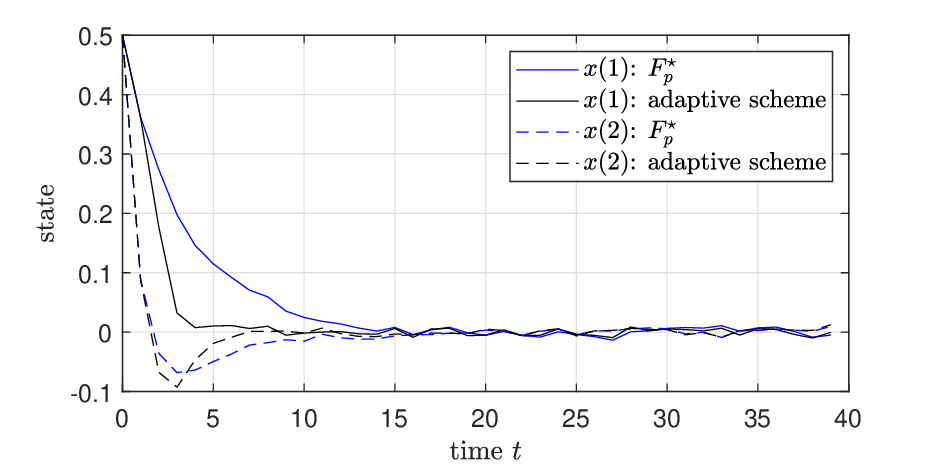}}
    \subfigure[]{\label{pic:input1_noise}
    \includegraphics[width=0.48\textwidth]{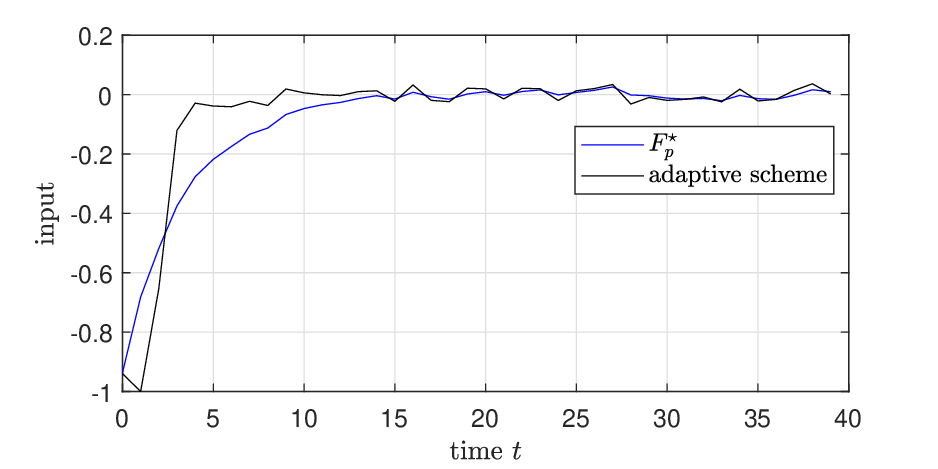}}
    \caption{Closed loop state and input trajectories under the proposed adaptive data-driven min-max MPC scheme and the static state-feedback control $F_p^\star$ for system \eqref{system_simulation} with process noise.}
    \label{pic:compare_noise}
\end{figure}

The above results show that when the SDP problems \eqref{sdp_initial} or \eqref{sdp_initial_noise} are initially feasible using the prior knowledge, the proposed adaptive data-driven min-max MPC improves closed-loop performance by using the online data.
We further consider the case with larger certainty on the time-varying parameter $b_t$, i.e., $0.2\leq b_t\leq 0.6$ for all $t\in\mathbb{N}$.
All other parameters remain the same as in the previous setting.
In this case, the SDP problem \eqref{sdp_initial} is infeasible at the initial time step, which implies that no stabilizing state-feedback controller can be designed using only the prior information.
From time $t=0$ to time $t=9$, the input is randomly selected from the interval $[-1, 1]$ to collect input-state data.
At time $t=10$, we use the prior information together with the collected online data to design an adaptive data-driven min-max MPC controller, as discussed in Remark~\ref{theorem_infeasible}.
The resulting SDP problem is feasible for all subsequent time steps.
Figure~\ref{pic:infeasible} illustrates the resulting closed-loop state and input trajectories.
Both the state and input trajectories converge to the origin and satisfy the constraint.
This result shows that the collected online data provide additional information for controller design when the prior knowledge is insufficient. 
%However, this approach relies on the feasibility of the SDP problem at all time steps.

\begin{figure}
    \centering
    \includegraphics[width=0.48\textwidth]{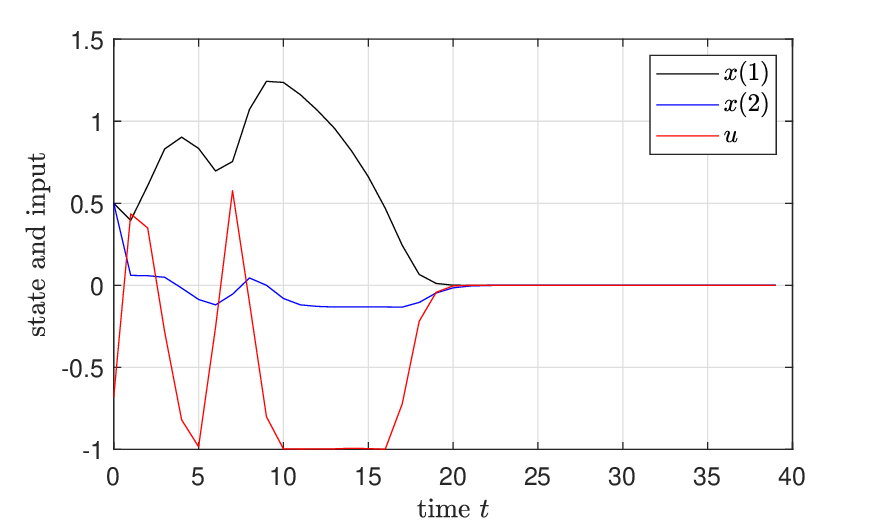}
    \caption{Closed loop state and input trajectories under the proposed adaptive data-driven min-max MPC scheme when the SDP \eqref{sdp_initial} is initially infeasible.}
    \label{pic:infeasible}
\end{figure}

\subsection{Periodic System}\label{sec:6.2}

Consider the periodic system in Example~\ref{example1}.
The matrices for the prior knowledge and bound on the variation of the system dynamics are chosen as in Example~\ref{example1}.
We choose the weighting matrix of the proposed scheme as $Q=R=I$.
The initial state is chosen as $x_0=[0.6;0.6;0.9]$.
The matrices for the constraint \eqref{constraint} is chosen as $C_x=0$ and $C_u=2$.
Figure~\ref{pic:compare_periodic} illustrates the closed-loop state and input trajectories under the proposed adaptive data-driven scheme and the static state-feedback control law $F_p^\star$ over $40$ iterations.
The closed-loop periodic system resulting from the proposed adaptive scheme converges faster to the origin compared with that of the static state-feedback control law.
The constraint is satisfied for both closed-loop trajectories.
The closed-loop cost of the proposed adaptive data-driven scheme is $17.61\%$ smaller than that of the static state-feedback control law $F_p^\star$.
This result shows that the proposed scheme is also effective for periodic LTV systems.

\begin{figure}
    \centering
    \subfigure[]{
    \includegraphics[width=0.48\textwidth]{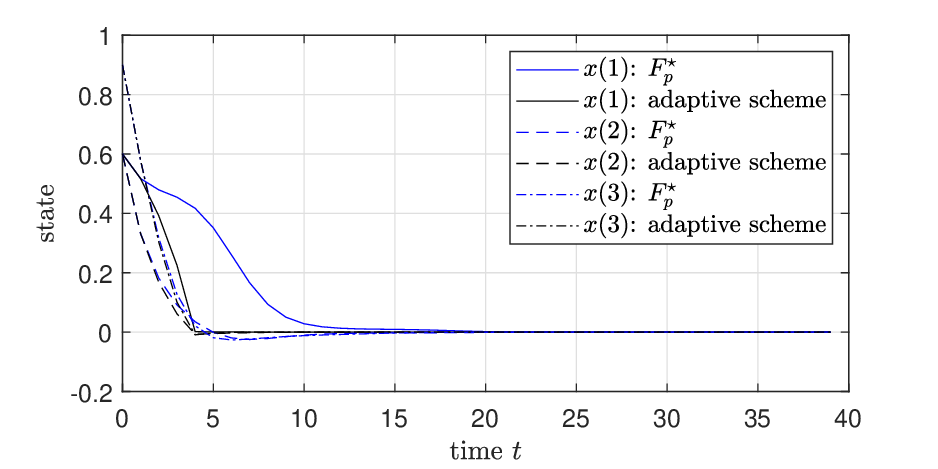}}
    \subfigure[]{
    \includegraphics[width=0.48\textwidth]{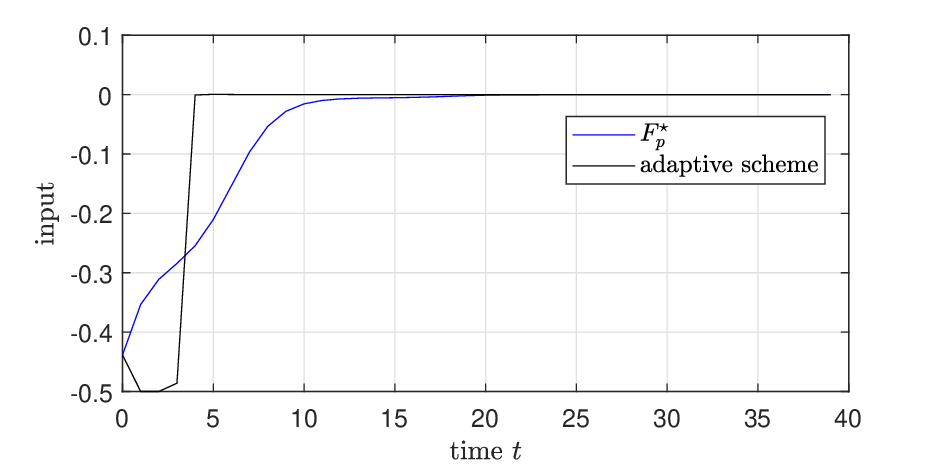}}
    \caption{Closed loop state and input trajectories under the proposed adaptive data-driven min-max MPC scheme and the static state-feedback control $F_p^\star$ for system in Example~\ref{example1}.}
    \label{pic:compare_periodic}
\end{figure}

We further consider the system in Example~\ref{example1} with process noise $\omega_t$ satisfying the Assumption~\ref{assumption3} with a known matrix $G=1\times 10^4 I$.
Figure~\ref{pic:compare_periodic_noise} illustrates the closed-loop state and input trajectories resulting from the proposed scheme in Sec.~\ref{sec:5} and the static state-feedback control law $F_p^\star$.
The constraint \eqref{constraint} is satisfied, and the state trajectories converge to a region around the origin. 
The closed-loop cost of the proposed adaptive data-driven scheme is $23.37\%$ smaller than that of the static state-feedback control law $F_p^\star$.

\begin{figure}
    \centering
    \subfigure[]{
    \includegraphics[width=0.48\textwidth]{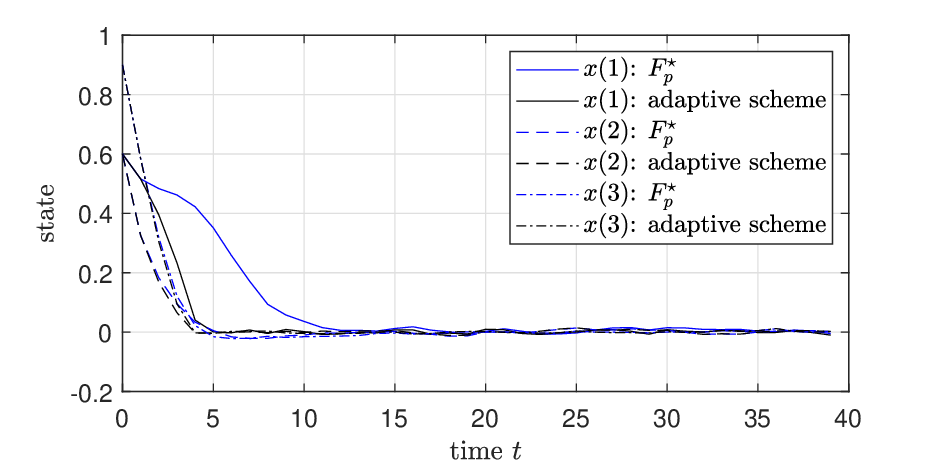}}
    \subfigure[]{
    \includegraphics[width=0.48\textwidth]{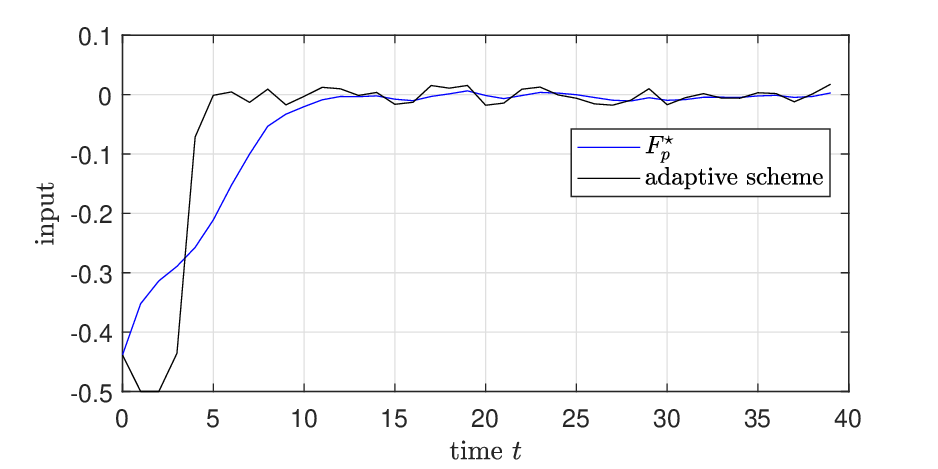}}
    \caption{Closed loop state and input trajectories under the proposed adaptive data-driven min-max MPC scheme and the static state-feedback control $F_p^\star$ for system in Example~\ref{example1} with process noise.}
    \label{pic:compare_periodic_noise}
\end{figure}

\section{Conclusion}\label{sec:7}
In this paper, we propose an adaptive data-driven min-max MPC scheme for LTV systems.
A data-driven characterization of the system dynamics at the current time step is developed using past input–state data.
The adaptive data-driven min-max MPC framework use both the prior knowledge on the system dynamics and the collected online input-state data for controller design.
For computationally tractability, the resulting control problem is reformulated as an SDP, which yields a state-feedback control law.
A receding horizon algorithm is proposed to
repeatedly solve the SDP at each time step.
We show that, when the SDP is initially feasible, the proposed scheme guarantees recursive feasibility, exponential stability and constraint satisfaction for the LTV systems.
Furthermore, we extend the framework to LTV systems subject to bounded process noise.
the data-driven system characterization and the resulting SDP account for the influence of process noise.
We establish that the proposed scheme guarantees recursive feasibility, robust stability and constraint satisfaction for LTV systems with process noise.
Simulation results show the effectiveness of the proposed method and show that incorporating online data improves closed-loop performance.

Several directions for future research remain open. 
It would be interesting to extend the proposed approach to track time-varying setpoints rather than stabilize the origin. 
Another promising direction is an development of the adaptive data-driven min-max MPC scheme for LTV systems using input-output data.

\bibliographystyle{unsrt}
\bibliography{main}

\end{document}